\documentclass[runningheads]{llncs}

\usepackage[T1]{fontenc}


\usepackage{graphicx}

\usepackage{xcolor, colortbl}

\usepackage{adjustbox}

\usepackage[framemethod=TikZ]{mdframed}
\usepackage{etoolbox}

\usepackage[frozencache]{minted}

\usepackage{stmaryrd}

\usepackage{mathtools}
\usepackage{amsmath} 
\usepackage{amssymb} 
\usepackage{mathpartir}

\usepackage{pifont}

\usepackage{algorithm2e}

\usepackage{diagbox}

\usepackage{hyperref}

\usepackage{stmaryrd}

\usepackage[
labelfont=sf,
hypcap=false,
format=hang,
width=0.8\columnwidth
]{caption}

\usepackage{tikz}
\usetikzlibrary{graphs}
\usetikzlibrary{positioning}

\AtBeginDocument{%
  }



\definecolor{egreen}{rgb}{0, 0.4, 0.267}
\definecolor{dkviolet}{rgb}{0.6,0,0.8}
\definecolor{dkgreen}{rgb}{0,0.4,0}
\definecolor{dkblue}{rgb}{0,0.1,0.5}
\definecolor{lightblue}{rgb}{0,0.5,0.5}
\definecolor{orange}{rgb}{0.9,0.39,0}
\definecolor{lightgrey}{RGB}{240,240,240}
\definecolor{darkgrey}{RGB}{125,125,125}


\newcommand{\orange}[1]{\textcolor{orange}{#1}}
\newcommand{\egreen}[1]{\textcolor{egreen}{#1}}


\newcommand{\enzo}[1]{\orange{}}
\newcommand{\assia}[1]{\egreen{}}


\newcommand{\ie}{\textit{i.e.}, }

\newcommand{\eg}{\textit{e.g.}, }
\newcommand{\etal}{\textit{et al.}}

\newcommand{\mono}[1]{\texttt{#1}}


\newcommand{\ninferrule}[3]{\inferrule{#1}{#2}\,(\textsc{#3})}
\newcommand{\subtype}{\ensuremath{\preccurlyeq}}
\newcommand{\term}[1]{\ensuremath{\mathcal{T}_{#1}}}
\newcommand{\constants}{\ensuremath{\mathcal{C}}}
\newcommand{\constantTypes}[1]{\ensuremath{T_{#1}}}
\newcommand{\eraseAnn}[1]{\ensuremath{\left|\,#1\,\right|^-}}
\newcommand{\topAnn}[1]{\ensuremath{\left|\,#1\,\right|^+}}

\newcommand{\ann}{\ensuremath{\mathcal{A}}}

\newcommand{\Var}{\mathrm{Var}}

\newcommand{\CComega}{\ensuremath{CC_\omega}}
\newcommand{\CCann}{\ensuremath{CC_\omega^+} }

\newcommand{\Trocq}{\textsc{Trocq}}
\newcommand{\Coq}{\textsf{Coq}}
\newcommand{\Agda}{\textsf{Agda}}
\newcommand{\CubicalAgda}{\textsf{Cubical Agda}}
\newcommand{\Lean}{\textsf{Lean}}
\newcommand{\mathlib}{\textsf{mathlib}}
\newcommand{\IsabelleHOL}{\textsf{Isabelle/HOL}}
\newcommand{\ArchiveOfFormalProofs}{\textsf{Archive of Formal Proofs}}
\newcommand{\coqinterval}{\textsf{coq-interval}}
\newcommand{\mathcompanalysis}{\textsf{mathcomp-analysis}}
\newcommand{\HoTT}{\textsf{HoTT}}

\newcommand{\CoqElpi}{\textsf{Coq-Elpi}}
\newcommand{\CoqEAL}{\textsf{CoqEAL}}

\newcommand{\MetaCoq}{\textsf{MetaCoq}}
\newcommand{\LtacII}{\textsf{Ltac 2}}
\newcommand{\Mtac}{\textsf{Mtac}}
\newcommand{\Elpi}{\textsf{Elpi}}
\newcommand{\LambdaProlog}{$\lambda$\textsf{Prolog}}
\newcommand{\Prolog}{\textsf{Prolog}}
\newcommand{\PumpkinPi}{\textsf{Pumpkin Pi}}





\newcommand{\coq}[1]{\mintinline{coq}`#1`}
\newcommand{\coqescape}[1]{\mintinline[mathescape=true,escapeinside=//]{coq}`#1`}
\newcommand{\elpi}[1]{\mintinline{elpi}`#1`}

\newcommand{\param}[3]{#1\,\sim\,#2\ \because\ #3}%
\newcommand{\paramq}[4]{#1\ @\ #2\,\sim\,#3\ \because\ #4}%

\newcommand{\paramt}[1]{\llbracket\,#1\,\rrbracket}
\newcommand{\paramtm}[1]{[\,#1\,]}
\newcommand{\relofparamtm}[1]{\textsf{rel}([\,#1\,])}
\newcommand{\txta}[0]{\text{a}}
\newcommand{\txtb}[0]{\text{b}}

\newcommand{\adm}[2]{{#1}\rhd{#2}}%

\newcommand{\ParamRec}[1]{\ensuremath{\boxbox}^{#1}}
\newcommand{\pproof}{\ensuremath{p}}

\newcommand{\ParamClass}[1]{\textsf{Class}_{#1}}
\newcommand{\ParamM}[1]{\textsf{M}_{#1}}
\newcommand{\Rel}[1]{\textsf{rel}(#1)}
\newcommand{\map}[1]{\textsf{map}(#1)}
\newcommand{\comap}[1]{\textsf{comap}(#1)}

%
%

\newcommand{\Type}[1]{\ensuremath{\square_{#1}}}
\newcommand{\emptyctxt}{\langle\rangle}
\newcommand{\tyju}[3]{\ #1 \vdash #2 : #3}


\newcommand{\rsym}[1]{{#1}^{-1}}
\newcommand{\dep}{\mathcal{D}}
\newcommand{\tonat}{\ensuremath{\uparrow_\mathbb{N}}}
\newcommand{\ofnat}{\ensuremath{\downarrow_\mathbb{N}}}

\newcommand{\ptweq}{\doteqdot}
\newcommand{\tequiv}{\simeq}
\newcommand{\isequiv}[1]{\textsf{IsEquiv}(#1)}
\newcommand{\isiso}[1]{\textsf{IsIso}(#1)}
\newcommand{\isfun}[1]{\textsf{IsFun}(#1)}
\newcommand{\iscontr}[1]{\textsf{IsContr}(#1)}
\newcommand{\isumap}[1]{\textsf{IsUmap}(#1)}
\newcommand{\conv}{\equiv}

\newcounter{tmpFigure}
\newcounter{nbListings}

\newcounter{nbAlgorithms}

\newcommand{\xnnR}{\ensuremath{{\overline{\mathbb{R}}_{\geq 0}}}}
\newcommand{\nnR}{\ensuremath{{\mathbb{R}_{\geq 0}}}}


\mdfdefinestyle{listingstyle}{%
  backgroundcolor=black!5,hidealllines=true,skipabove=5pt%
  innerleftmargin=5pt,innerrightmargin=0pt,innertopmargin=2pt,innerbottommargin=0pt%
}

\newcommand\trocqcode[2]{\href{https://github.com/coq-community/trocq/blob/0.1.5/#1}{#2}}

\makeatletter
\newcommand*\mysize{%
  \@setfontsize\mysize{9.0}{9.0}%
}
\makeatother
\setminted[coq]{bgcolor=lightgrey,mathescape=true,escapeinside=//,fontsize=\mysize}

\setlength{\abovecaptionskip}{3pt}
\setlength{\belowcaptionskip}{-5pt}
\setlength{\floatsep}{8pt}
\setlength{\textfloatsep}{15pt}
\setlength{\intextsep}{8pt}

\begin{document}


\title{Trocq: Proof Transfer for Free, With or Without Univalence\thanks{This work has received funding from the {European Research Council (ERC)} under the European Union’s Horizon 2020 research and innovation programme (grant agreement No.{101001995}).}}


\author{Cyril Cohen\inst{1}\orcidID{0000-0003-3540-1050} \and
Enzo Crance\inst{2,3}\orcidID{0000-0002-0498-0910} \and
Assia Mahboubi\inst{2,4}\orcidID{0002-0312-5461}}

\institute{Université Côte d'Azur, Inria\\
\email{Cyril.Cohen@inria.fr} \and
Nantes Université, École Centrale Nantes, CNRS, INRIA, LS2N, UMR 6004
\email{\{Enzo.Crance,Assia.Mahboubi\}@inria.fr} \and
Mitsubishi Electric R\&D Centre Europe
\and
Vrije Universiteit Amsterdam, The Netherlands
}

\maketitle
\begin{abstract}
This article presents \Trocq{}, a new proof transfer framework for dependent type theory. \Trocq{} is
based on a novel formulation of type equivalence, used to generalize the univalent
parametricity translation. This framework takes care of avoiding dependency on the axiom of
univalence when possible, and may be used with more relations than just equivalences.
We have implemented a corresponding plugin for the \Coq{} interactive theorem prover, in the
\CoqElpi{} meta-language.

\keywords{Parametricity, Representation independence, Univalence, \\ Proof assistants, Proof transfer}
\end{abstract}



\section{Introduction}\label{sec:intro}

Formalizing mathematics provides every object and statement of the mathematical literature with an explicit data structure, in a
certain choice of foundational formalism. As one would expect, several
such explicit representations are most often needed for a same
mathematical concept. Sometimes, these different choices are already
made explicit on paper: multivariate polynomials can for instance be
represented as lists of coefficient-monomial pairs, \eg when
computing Gröbner bases, but also as univariate polynomials with
polynomial coefficients, \eg for the purpose of projecting algebraic
varieties. The conversion between these equivalent data structures
however remains implicit on paper, as they code in fact for the
same free commutative algebra. In some other cases, implementation
details are just ignored on paper, \eg when a proof involves both
reasoning with Peano arithmetic and computing with large
integers.
\begin{example}[Proof-oriented vs.
    computation-oriented data structures]\label{ex:elim}
  The standard library of the \Coq{} interactive theorem
  prover~\cite{the_coq_development_team_2022_7313584} has
  two data structures for representing natural numbers. Type $\mathbb{N}$ is the base-1 number system and the associated elimination principle \coqescape{/$\mathbb{N}$/_ind} is the usual recurrence scheme:
\begin{minted}{coq}
Inductive /$\mathbb{N}$/ : Type := O/$_{\mathbb{N}}$/ : /$\mathbb{N}$/ | /S$_{\mathbb{N}}$/ (n : /$\mathbb{N}$/) : /$\mathbb{N}$/.

/$\mathbb{N}$/_ind : /$\forall$/ P : /$\mathbb{N}$/ /$\rightarrow$/ /$\square$/, P O/$_{\mathbb{N}}$/ /$\rightarrow$/ (/$\forall$/ n : /$\mathbb{N}$/, P n /$\rightarrow$/ P (S n)) /$\rightarrow$/ /$\forall$/ n : /$\mathbb{N}$/, P n
\end{minted}
On the other hand, type \coq{N} provides a binary representation \coq{positive} of non-negative integers, as sequences of bits with a head $1$,
and is thus better suited for coding efficient arithmetic
operations. The successor function
\coq{S/$_{\tt N}$/ : N /$\to$/ N} is no longer a constructor of the type, but can
be implemented as a program, via an auxiliary successor function \coq{S/$_{{\tt pos}}$/}
for type \coq{positive}.
\begin{minted}{coq}
Inductive positive : Type :=
  xI : positive /$\to$/ positive | xO : positive /$\to$/ positive | xH : positive.

Inductive N : Type := O/$_{\texttt{N}}$/ : N | Npos : positive /$\rightarrow$/ N.

Fixpoint S/$_\mono{pos}$/ (p : positive) : positive := match p with
  | xH /$\Rightarrow$/ xO xH | xO p /$\Rightarrow$/ xI p | xI p /$\Rightarrow$/ xO (S/$_\mono{pos}$/ p) end.

Definition S/$_{\texttt{N}}$/ (n : N) := match n with
  | Npos p /$\Rightarrow$/ Npos (S/$_{\texttt{pos}}$/ p) | _ /$\Rightarrow$/ Npos xH end.
\end{minted}
This successor function is useful to implement conversions
$\tonat\ :\ \texttt{N}\ \to\ \mathbb{N}$ and
$\ofnat\ :\ \mathbb{N}\ \to\ \texttt{N}$
between the unary and binary
representations. These conversion functions are in fact inverses of
each other. The natural recurrence scheme on natural numbers thus
\emph{transfers} to type \coq{N}:
\begin{minted}{coq}
N_ind : /$\forall$/ P : N /$\rightarrow$/ /$\square$/, P O/$_{\texttt{N}}$/ /$\rightarrow$/ (/$\forall$/ n : N, P n /$\rightarrow$/ P (S/$_{\texttt{N}}$/ n)) /$\rightarrow$/ /$\forall$/ n : N, P n
\end{minted}
Incidentally, \coq{N_ind} can be proved from \coq{/$\mathbb{N}$/_ind}
by using only the fact that $\ofnat$ is a left inverse
of $\tonat$, and the following compatibility lemmas:
\[ \ofnat \texttt{O}_{\mathbb{N}} = \texttt{O}_{\texttt{N}} \quad \text{and} \quad \forall n : \mathbb{N},\quad \ofnat (\texttt{S}_{\mathbb{N}}\ n) = \texttt{S}_\texttt{N}\ (\ofnat\,n)\]
\end{example}

Proof transfer issues are not tied to program verification. For
instance, the formal study of summation and integration, in basic real
analysis, provides a classic example of frustrating 
bureaucracy.
\begin{example}[Extended domains]\label{ex:sums}
Given a sequence $(u_n)_{n\in\mathbb{N}}$ of non-negative
real numbers, \ie a function
$u\ : \ \mathbb{N} \rightarrow [0,+\infty[$, $u$ is said to be
\emph{summable} when the sequence $(\sum_{k=0}^nu_k)_{n\in\mathbb{N}}$ has a finite limit,
denoted $\sum u$. Now for two summable sequences $u$ and $v$, it
is easy to see that $u + v$, the sequence obtained by point-wise addition of $u$ and
$v$, is also a summable sequence, and that:
\begin{equation}\label{eq:sum}
\sum (u + v) = \sum u + \sum v
\end{equation}
As expression $\sum u$ only makes sense when $u$ is a summable sequence, any algebraic operation ``under the sum'', \eg rewriting $\sum (u + (v + w))$ into $\sum ((w + u) + v)$, \emph{a priori} requires a proof of summability for every rewriting step. In a classical setting, the standard approach rather assigns a default value to the case of an infinite sum, and introduces an extended domain $[0,+\infty]$. Algebraic operations on real numbers, like addition, are extended to the extra $+\infty$ case. Now for a sequence $u\ : \ \mathbb{N} \rightarrow [0,+\infty]$, the limit $\sum u$ is always defined, as
increasing partial sums either converge to a finite limit, or diverge
to $+\infty$. The road map is then to first prove that
Equation~\ref{eq:sum} holds for \emph{any} two sequences of
\emph{extended} non-negative numbers. The result is then
\emph{transferred} to the special case of summable sequences of
non-negative numbers. Major libraries of formalized mathematics
including \Lean{}'s \mathlib{}~\cite{DBLP:conf/cpp/X20}, \IsabelleHOL{}'s
\ArchiveOfFormalProofs{},
\coqinterval{}~\cite{DBLP:journals/jar/Martin-DorelM16} or \Coq{}'s
\mathcompanalysis{}~\cite{affeldt2023measure}, resort to such
extended domains and transfer steps, notably for defining measure
theory.  Yet, as reported by expert users~\cite{gouezel}, the
associated transfer bureaucracy is essentially done manually and thus
significantly clutters formal developments in real and complex
analysis, probabilities,~etc.
\end{example}

Users of interactive
theorem provers should be allowed to elude mundane arguments
pertaining to proof transfer, as they would on paper, and spare
themselves the related bureaucracy. Yet, they
still need to convince the proof checker and thus have to provide
explicit transfer proofs, albeit ideally automatically generated
ones. The present work aims at providing a general method for
implementing this nature of automation, for a diverse range of proof
transfer problems.

In this paper, we focus on interactive theorem provers based on dependent type theory,
such as \Coq{}, \Agda{}~\cite{DBLP:conf/afp/Norell08} or
\Lean{}~\cite{DBLP:conf/cade/Moura021}. These proof management systems
are genuine functional programming languages, with full-spectrum
dependent types, a context in which representation independence
meta-theorems can be turned into concrete instruments for achieving
program and proof transfer.

Seminal results on the contextual equivalence of distinct implementations of a
same abstract interface were obtained for System F, using logical
relations~\cite{DBLP:conf/popl/Mitchell86} and parametricity
meta-theorems~\cite{DBLP:conf/ifip/Reynolds83,DBLP:conf/fpca/Wadler89}.
In the context of type theory, such meta-theorems can be turned into
syntactic translations of the type theory of interest into itself,
automating this way the generation of the statement and proof of
parametricity properties for type families and for programs. Such syntactic
relational models can accommodate dependent
types~\cite{DBLP:conf/fossacs/BernardyL11}, inductive
types~\cite{DBLP:journals/jfp/BernardyJP12} and scale to the 
Calculus of Inductive Constructions, with an impredicative
sort~\cite{DBLP:conf/csl/KellerL12}.

In particular, the
\emph{univalent parametricity} translation~\cite{univparam2} leverages the univalence axiom~\cite{hottbook} so as to transfer
statements using established equivalences of types. This
approach crucially removes the need for devising an explicit common
interface for the types in relation. In presence of an internalized
univalence axiom and of higher-inductive types, the \emph{structure
identity principle} provides internal representations of independence
results, for more general relations between types
than equivalences~\cite{DBLP:journals/pacmpl/AngiuliCMZ21}. This last
approach is thus particularly relevant in cubical type
theory~\cite{cohen_et_al:LIPIcs.TYPES.2015.5,DBLP:journals/pacmpl/VezzosiM019}. Indeed, a computational interpretation of the
univalence axiom brings computational adequacy to otherwise possibly
stuck terms, those resulting from a transfer involving an axiomatized
univalence principle.

Yet taming the bureaucracy of proof transfer remains hard in practice for users of \Coq{}, \Lean{} or \Agda{}. Examples~\ref{ex:elim} and~\ref{ex:sums} actually illustrate fundamental limitations of the existing approaches:

\paragraph{Univalence is overkill} Both univalent parametricity and the structure identity principle can be used to derive the statement and the proof of the induction principle \coq{N_ind} of Example~\ref{ex:elim}, from the elimination scheme of type $\mathbb{N}$. But up to our knowledge, all the existing methods for automating this implication pull in the univalence principle in the proof, although it can be obtained by hand by very elementary means. This limitation is especially unsatisfactory for developers of libraries formalizing classical mathematics, and notably \Lean{}'s
\mathlib{}. These libraries indeed typically assume a strong form of proof irrelevance, which is incompatible with univalence, and thus with univalent parametricity.

\paragraph{Equivalences are not enough, neither are quotients}
Univalent parametricity cannot help with Example~\ref{ex:sums}, as type $[0,+\infty[$ is \emph{not equivalent} to its extended version $[0,+\infty]$. In fact, we are not aware of any tool able to automate this proof transfer. In particular, the structure identity principle~\cite{DBLP:journals/pacmpl/AngiuliCMZ21} would not apply as such.

\paragraph{Contributions} In short, existing techniques for
transferring results from one type to another, \eg from \coq{/$\mathbb{N}$/} to \coq{N} or from extended real numbers to real numbers, are either not
suitable for dependent types, or too coarse to track the
exact amount of data needed in a given proof, and not more.
This paper presents three contributions improving this unfortunate state of affairs:
\begin{itemize}
\item A parametricity framework \emph{à la carte}, that generalizes
  the univalent parametricity translation~\cite{univparam2}, as well as refinements à la
  \CoqEAL{}~\cite{DBLP:conf/cpp/CohenDM13} and generalized
  rewriting~\cite{DBLP:journals/jfrea/Sozeau09}.
  Its pivotal ingredient is a variant of Altenkirch and Kaposi's symmetrical presentation of type equivalence~\cite{DBLP:conf/types/AltenkirchK15}.
\item A conservative subtyping extension of $\CComega$~\cite{DBLP:journals/iandc/CoquandH88}, used to
  formulate an inference algorithm for the synthesis of parametricity
  proofs.
\item The implementation of a new parametricity plugin for the \Coq{}
  interactive theorem prover, using the \CoqElpi{}~\cite{tassi:hal-01897468}
  meta-language. This plugin rests on original formal
  proofs, conducted on top of the \HoTT{}
  library~\cite{DBLP:conf/cpp/BauerGLSSS17}, and is distributed with
  a collection of application examples.
\end{itemize}

\paragraph{Outline}
The rest of this paper is organized as follows.
Section~\ref{sec:strenghts-limits} introduces proof transfer and recalls
the principle, strengths and weaknesses of the univalent parametricity
translation.
In Section~\ref{sec:disassembling-reassembling}, we present a new
definition of type equivalence, motivating a hierarchy of structures for relations preserved by parametricity.
Section~\ref{sec:strat-param-rules} then presents variants of parametricity translations.
In Section~\ref{sec:applications}, we discuss a few examples of applications and we conclude in
Section~\ref{sec:rel-work-and-concl}.

\section{Strengths and limits of univalent parametricity}\label{sec:strenghts-limits}

We first clarify the essence of proof transfer in dependent type theory (\textsection~\ref{ssec:bigp}) and briefly recall a few concepts related to type equivalence and univalence (\textsection~\ref{ssec:univ}). We then review and discuss the limits
of univalent parametricity (\textsection~\ref{ssec:univparam}).

\subsection{Proof transfer in type theory}\label{ssec:bigp}
We recall the syntax of the Calculus of Constructions, \CComega{}, a
$\lambda$-calculus with dependent function types and a predicative
hierarchy of universes, denoted $\Type{i}$:

\[A, B, M, N ::= \Type{i}\ |\ x\ |\ M\ N \ |\ \lambda x : A.\,M
\ |\ \Pi x : A.\,B\]
We omit the typing rules of the calculus, and refer the reader to standard references (\eg~\cite{paulinmohring:hal-01094195,nederpelt_geuvers_2014}). We also use the standard
equality type, called propositional equality, as well as dependent pairs,
denoted $\Sigma x : A.\,B$.  We write $t \conv u$ the definitional equality
between two terms $t$ and $u$. Interactive theorem provers like \Coq{}, \Agda{} and \Lean{} are
based on various extensions of this core, notably with inductive types or with
an impredicative sort. When the universe level does not matter, we casually
remove the annotation and use notation $\Type{}$.

In this context, proof transfer from type $T_1$ to type $T_2$ roughly
amounts to \emph{synthesizing} a new type former $W : T_2 \rightarrow
\Type{}$, \ie a type parametric in some type $T_2$, from an initial
type former $V : T_1 \rightarrow \Type{}$, \ie a type parametric
in some type $T_1$, so as to ensure that for some given relations
$R_{T} : T_1 \rightarrow T_2 \rightarrow \Type{}$ and
$R_{\Type{}} : \Type{} \rightarrow \Type{} \rightarrow \Type{}$,
there is a proof $w$ that:
\[\tyju{\Gamma}{w}{\forall (t_1 : T_1) (t_2 : T_2),
  R_{T}\ t_1\ t_2 \rightarrow R_{\Type{}}(V\ t_1) (W\ t_2)}\]
for a
suitable context $\Gamma$. This setting generalizes as expected to
$k$-ary type formers, and to more pairs of related types. In practice,
relation $R_{\Type{}}$ is often a right-to-left arrow, \ie
$R_{\Type{}}\ A\ B \triangleq B \rightarrow A$, as in this case the
proof $w$ substantiates a proof step turning a goal clause $\Gamma
\vdash V\ t_1$ into $\Gamma \vdash W\ t_2$.

Phrased as such, this
synthesis problem is arguably quite loosely specified. Consider for instance
the transfer problem discussed in Example~\ref{ex:elim}. A first possible
formalization involves the design of an appropriate common interface structure
for types  $\mathbb{N}$ and $\texttt{N}$, for instance by setting both $T_1$ and $T_2$ as
$\Sigma N : \Type{}. N \times (N \to N)$, and both $V$ and $W$ as:
$\lambda X : T_1.\, \Pi P : X.1 \to \Box.\, P\ X.2 \to (\Pi n : X.1.\, P\ n \to P\ (X.3\ n)) \to \Pi n : X.1.\, P\ n$,
where $X.i$ denotes the $i$-th item in the dependent tuple $X$. In this case,
relation $R_T$ may characterize isomorphic instances of the structure. Such instances of
proof transfer are elegantly addressed in cubical type theories via internal representation independence results~\cite{DBLP:journals/pacmpl/AngiuliCMZ21}. In the context of \CComega, the hassle of devising explicit
structures by hand has been termed the \emph{anticipation} problem~\cite{univparam2}.

Another option is to consider two different types $T_1 \triangleq \mathbb{N}
\times (\mathbb{N} \rightarrow \mathbb{N})$ and $T_2 \triangleq \mathrm{N}
\times (\mathrm{N} \rightarrow \mathrm{N})$ and
\begin{align*}
  V' \triangleq\ & \lambda X : T_1.\  \forall P : \mathbb{N} \to \Box.\ P\ X.1 \to (\forall n : \mathbb{N}, P\ n \to P (X.2\ n)) \to \forall n : \mathbb{N}, P\ n\\
  W' \triangleq\ & \lambda X : T_2.\  \forall P : \mathrm{N} \to \Box.\ P\ X.1 \to (\forall n : \mathrm{N}, P\ n \to P (X.2\ n)) \to \forall n : \mathrm{N}, P\ n
\end{align*}
where one
would typically expect $R_T$ to be a type equivalence between $T_1$ and $T_2$, so as to
transport $(V'\ t_1)$ to $(W'\ t_2)$, along this equivalence.

Note that some solutions of given instances of proof transfer problems are
in fact too trivial to be of interest. Consider for example the case
of a \emph{functional} relation between $T_2$ and $T_1$,
with $R_T \ t_1\ t_2$ defined as $t_1 = \phi\ t_2$,
for some $\phi : T_2 \rightarrow T_1$.
In this case, the composition $V \circ \phi$ is an obvious candidate for $W$,
but is often uninformative. Indeed, this composition can only
propagate structural arguments, blind to the additional
mathematical proofs of program equivalences potentially available
in the context. For instance, here is a useless variant of $W'$:
\begin{align*}
  W'' \triangleq \lambda X : T_2. \quad & \forall P : \mathbb{N} \to \Box.\ P\ (\uparrow_{\mathbb{N}} X.1) \to \\
  & \left(\forall n : \mathbb{N}, P\ n \to P \left(\uparrow_{\mathbb{N}} (X.2\ (\downarrow_{\mathbb{N}} n))\right)\right) \to \forall n : \mathbb{N}, P\ n.
\end{align*}

Automation devices dedicated to proof transfer thus typically consist
of a meta-program which attempts to compute type former $W$ and
 proof $w$ by induction on the structure of $V$, by composing
registered canonical pairs of related terms, and the corresponding
proofs. These tools differ by the nature of relations they can
accommodate, and by the class of type formers they are able to
synthesize. For instance, \emph{generalized
rewriting}~\cite{DBLP:journals/jfrea/Sozeau09}, which provides
essential support to formalizations based on
setoids~\cite{DBLP:journals/jfp/BartheCP03}, addresses the case of
homogeneous (and reflexive) relations, \ie when $T_1$ and $T_2$
coincide. The \CoqEAL{} library~\cite{DBLP:conf/cpp/CohenDM13}
provides another example of such transfer automation tool, geared
towards \emph{refinements}, typically from a proof-oriented
data-structure to a computation-oriented one. It is thus specialized
to heterogeneous, functional relations but restricted to closed,
quantifier-free type formers. We now discuss the few transfer methods
which can accommodate dependent types and heterogeneous relations.

\subsection{Type equivalences, univalence}\label{ssec:univ}

Let us first focus on the special case of types related by an
\emph{equivalence}, and start with a few standard definitions,
notations and lemmas. Omitted details can be found in the usual
references, like the Homotopy Type Theory book~\cite{hottbook}. Two
functions $f,g : A \rightarrow B$ are \emph{point-wise equal},
denoted $f \ptweq g$ when their values coincide on all arguments, that
is $f \ptweq g \triangleq \Pi a : A.\,f\ a = g\ a$. For any type $A$,
$id_A$ denotes $\lambda a : A.\,a$, the identity function on $A$, and
we write $id$ when the implicit type $A$ is not ambiguous.

\begin{definition}[Type isomorphism, type equivalence]\label{def:equiv}
A function $f : A \rightarrow B$ is an \emph{isomorphism}, denoted
$\isiso{f}$, if there exists a function $g : B \rightarrow A$ which
satisfies the section and retraction properties, \ie
$g$ is respectively a point-wise left and right inverse of
$f$. A function $f$ is an \emph{equivalence}, denoted $\isequiv{f}$, when it moreover enjoys a \emph{coherence property}, relating the proofs of the section and retraction properties and ensuring that $\isequiv{f}$ is proof-irrelevant.

Types $A$ and $B$ are \emph{equivalent}, denoted $A \tequiv B$, when
there is an equivalence $f : A \rightarrow B$:
\[ A \tequiv B \enspace \triangleq \enspace \Sigma f : A \rightarrow B.\,\isequiv{f}\]
\end{definition}

\begin{lemma}\label{lem:adj}
Any isomorphism $f : A \rightarrow B$ is also an equivalence.
\end{lemma}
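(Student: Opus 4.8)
The plan is to upgrade the isomorphism data to a coherent (half-adjoint) equivalence, keeping $f$ fixed and correcting one of the homotopies. Unfolding $\isiso{f}$ yields an inverse $g : B \to A$ together with a section $\eta : g \circ f \ptweq id_A$ and a retraction $\epsilon : f \circ g \ptweq id_B$. By the half-adjoint formulation of $\isequiv{f}$ described in Definition~\ref{def:equiv}, it suffices to exhibit a retraction $\epsilon'$ satisfying the \emph{triangle coherence} $\mathsf{ap}_f(\eta\,a) = \epsilon'(f\,a)$ for every $a : A$, where $\mathsf{ap}_f$ denotes the action of $f$ on paths. The essential difficulty is that this coherence cannot be expected to hold for the given $\epsilon$, so $\epsilon$ must be modified.

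First I would record two elementary facts about homotopies, both by path induction. \emph{(i) Naturality}: for $H : u \ptweq v$ between $u,v : X \to Y$ and a path $p : x = y$, one has $H(x) \cdot \mathsf{ap}_v(p) = \mathsf{ap}_u(p) \cdot H(y)$. \emph{(ii) Endo-homotopy lemma}: for $H : h \ptweq id_X$ with $h : X \to X$, the paths $H(h\,x)$ and $\mathsf{ap}_h(H\,x)$ coincide (this follows from (i) applied to $p := H\,x$, after cancelling $H\,x$ on the right). Together with the functoriality $\mathsf{ap}_{g \circ f} = \mathsf{ap}_g \circ \mathsf{ap}_f$, these are the only inputs needed beyond routine path algebra.

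Next I would define the corrected retraction by
\[ \epsilon'(b) \;\triangleq\; \rsym{\epsilon(f(g\,b))} \cdot \mathsf{ap}_f(\eta(g\,b)) \cdot \epsilon(b), \]
and verify $\mathsf{ap}_f(\eta\,a) = \epsilon'(f\,a)$ by instantiating $b := f\,a$ and simplifying the right-hand side: use (ii) with $h := g \circ f$ to rewrite $\eta(g(f\,a))$ as $\mathsf{ap}_{g \circ f}(\eta\,a)$, hence $\mathsf{ap}_f(\eta(g(f\,a))) = \mathsf{ap}_{f \circ g}(\mathsf{ap}_f(\eta\,a))$; then apply naturality (i) of $\epsilon$ along the path $\mathsf{ap}_f(\eta\,a) : f(g(f\,a)) = f\,a$ to slide the factors past one another; and finally cancel the two matching $\epsilon$-inverse pairs. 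I expect this path-algebra manipulation—keeping track of whiskering, functoriality of $\mathsf{ap}$, and the cancellation of inverse paths—to be the main obstacle, with everything else being bookkeeping.

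Finally, the quadruple $(g, \eta, \epsilon', \tau)$, where $\tau$ is the coherence just established, is by definition a witness of $\isequiv{f}$; the half-adjoint packaging is precisely what makes $\isequiv{f}$ proof-irrelevant, as required by Definition~\ref{def:equiv}. Hence every isomorphism $f$ is also an equivalence. This is the standard improvement of a quasi-inverse into a coherent equivalence.
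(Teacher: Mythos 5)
Your proof is correct and is precisely the argument the paper implicitly relies on: the paper states Lemma~\ref{lem:adj} without proof, deferring omitted details to the standard references~\cite{hottbook}, and your construction is exactly the usual improvement of a quasi-inverse into a half-adjoint equivalence (HoTT book, Theorem~4.2.3), producing the corrected retraction $\epsilon'$ and the triangle coherence that the paper later spells out as $\Pi a : A.\,\mathsf{ap}_{\uparrow_e}(s\ a) = r \circ \downarrow_e$ in \textsection~\ref{ssec:disassembling}. Nothing further is needed.
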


The data of an equivalence $e : A \tequiv B$ thus include two
\emph{transport functions}, denoted respectively $\uparrow_e\ : A
\rightarrow B$ and $\downarrow_e\ : B \rightarrow A$. They can be
used for proof transfer from $A$ to $B$, using $\uparrow_e$ at
covariant occurrences, and $\downarrow_e$ at contravariant ones. The
\emph{univalence principle} asserts that equivalent types are
interchangeable, in the sense that all universes are univalent.

\begin{definition}[Univalent universe]
A universe $\mathcal{U}$ is univalent if for any two types $A$ and $B$ in
$\mathcal{U}$, the canonical map $A = B \rightarrow A \tequiv B$ is an
equivalence.
\end{definition}
In variants of \CComega, the \emph{univalence axiom} has no explicit computational content: it just postulates that all universes $\Box_i$ are univalent, as for instance in the \HoTT{} library for the \Coq{} interactive theorem prover~\cite{DBLP:conf/cpp/BauerGLSSS17}. Some more recent variants of
dependent type theory~\cite{cohen_et_al:LIPIcs.TYPES.2015.5,DBLP:journals/mscs/AngiuliBCHHL21} feature a built-in computational univalence
principle. They are used to implement experimental interactive theorem provers, such as
\CubicalAgda{}~\cite{DBLP:journals/pacmpl/VezzosiM019}. In both cases, the
univalence principle provides a powerful proof transfer principle from $\Type{}$
to $\Type{}$, as for any two types $A$ and $B$ such that $A \tequiv B$, and any
$P : \Type{} \rightarrow \Type{}$, we can obtain that $P\ A \tequiv P\ B$ as a
direct corollary of univalence. Concretely, $P\ B$ is obtained from $P\ A$ by
appropriately allocating the transfer functions provided by the equivalence
data, a transfer process typically useful in the context of proof
engineering~\cite{DBLP:conf/pldi/RingerPYLG21}.

Going back to our example from
\textsection~\ref{ssec:bigp}, transferring along an equivalence
$\mathbb{N} \tequiv \texttt{N}$ thus produces~$W''$ from $V'$.
Assuming univalence,
one may achieve the more informative
transport from $V'$ to $W'$, using a method called
\emph{univalent parametricity}~\cite{univparam2}, which we discuss in the next
section.

\subsection{Parametricity translations}\label{ssec:univparam}
Univalent parametricity strengthens the transfer principle provided by
the univalence axiom by combining it with parametricity.  In \CComega,
the essence of parametricity, which is to devise a relational
interpretation of types, can be turned into an actual syntactic
translation, as relations can themselves be modeled as
$\lambda$-terms in \CComega. The seminal work of Bernardy, Lasson \etal~\cite{DBLP:conf/fossacs/BernardyL11,DBLP:conf/csl/KellerL12,DBLP:journals/jfp/BernardyJP12} combine in what we refer to as the \emph{raw parametricity
translation}, which essentially defines inductively a logical relation
$\paramt{T}$ for any type $T$,
as described on Figure~\ref{fig:rawparam}.
This presentation uses the standard convention that $t'$ is the term
obtained from a term $t$ by replacing every variable $x$ in $t$ with a
fresh variable $x'$. A variable $x$ is translated into a variable
$x_R$, where $x_R$ is a fresh name. 
\begin{figure}[h]
  \begin{itemize}
  \item Context translation:
    \begin{align}\label{eqn:rawparam}
      \paramt{\emptyctxt} & = \emptyctxt\\
      \paramt{\Gamma, x : A} & = \paramt{\Gamma}, x : A, x' : A', x_R :
      \paramt{A}\ x\ x' \label{eq:rawparam-var}
    \end{align}
  \item Term translation:
    \begin{align}
      \paramt{\Type{i}} & = \lambda A\,A'.\,A \rightarrow A' \rightarrow \Type{i}\\
      \paramt{x}  & = x_R\\
      \paramt{A\ B} &= \paramt{A}\ B\ B'\ \paramt{B}\\
      \paramt{\lambda x : A.\,t} &= \lambda(x : A) (x' : A')(x_R :
        \paramt{A} \ x\ x').\,\paramt{t}\\
      \paramt{\Pi x: A.\,B} &= \lambda f\,f'.\,\Pi(x : A)(x' : A')(x_R :
        \paramt{A}\ x\ x').\,\paramt{B} (f\ x) (f'\ x')\label{eq:rawparampi}
    \end{align}
  \end{itemize}
  \caption{Raw parametricity translation for \CComega.}
\label{fig:rawparam}
\end{figure}
Parametricity follows from the associated fundamental theorem, also called abstraction theorem~\cite{DBLP:conf/ifip/Reynolds83}:

\begin{theorem}\label{thm:abs-raw}
  If $\tyju{\Gamma}{t}{T}$ then the following hold: $\tyju{\paramt{\Gamma}}{t}{T}$, $\tyju{\paramt{\Gamma}}{t'}{T'}$ and~$\tyju{\paramt{\Gamma}}{\paramt{t}}{\paramt{T}\ t\ t'}$.
\end{theorem}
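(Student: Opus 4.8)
The plan is to proceed by induction on the derivation of $\tyju{\Gamma}{t}{T}$, proving the three conclusions together. The first two, $\tyju{\paramt{\Gamma}}{t}{T}$ and $\tyju{\paramt{\Gamma}}{t'}{T'}$, are essentially bookkeeping and I would dispatch them as a preliminary observation rather than inside the main induction. By construction the translated context $\paramt{\Gamma}$ contains all the declarations of $\Gamma$, so $\tyju{\paramt{\Gamma}}{t}{T}$ follows from the hypothesis by weakening. For the primed judgment, I would first note that the priming operation $t \mapsto t'$ is a bijective renaming of variables and hence preserves derivability, giving $\tyju{\Gamma'}{t'}{T'}$ where $\Gamma'$ is the renamed copy of $\Gamma$; since $\paramt{\Gamma}$ also contains, via the declarations $x' : A'$, a copy of $\Gamma'$, a second weakening yields $\tyju{\paramt{\Gamma}}{t'}{T'}$. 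This lets the induction concentrate on the genuinely informative third conclusion $\tyju{\paramt{\Gamma}}{\paramt{t}}{\paramt{T}\ t\ t'}$.

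For that conclusion I would treat the typing rules one by one. The variable case is immediate, since $\paramt{x} = x_R$ and the declaration $x_R : \paramt{A}\ x\ x'$ is exactly what the context translation inserts for $x : A$. The universe case reduces to checking that $\lambda A\,A'.\,A \rightarrow A' \rightarrow \Type{i}$ inhabits $\Type{i} \rightarrow \Type{i} \rightarrow \Type{i+1}$, which is $\paramt{\Type{i+1}}\ \Type{i}\ \Type{i}$. For an abstraction $\lambda x : A.\,s$, applying the inductive hypothesis in the extended context $\paramt{\Gamma, x : A}$ produces $\paramt{s}$ at type $\paramt{B}\ s\ s'$; abstracting over the three fresh variables $x, x', x_R$ gives a term whose type matches the goal after a $\beta$-step, so equality holds up to definitional equality $\conv$. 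The product case $\Pi x : A.\,B$ uses the inductive hypotheses for $A$ and $B$ to verify that the synthesized relation is a well-typed element of the appropriate universe.

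The delicate case, and the main obstacle, is application $M\ N$. Here $\paramt{M\ N} = \paramt{M}\ N\ N'\ \paramt{N}$, and feeding the arguments $N$, $N'$, $\paramt{N}$ (the last well-typed by the inductive hypothesis on $N$) into the $\Pi$-type obtained for $\paramt{M}$ yields, after $\beta$-reduction, a type of shape $\paramt{B}[N/x, N'/x', \paramt{N}/x_R]\ (M\ N)\ (M'\ N')$, whereas the goal demands $\paramt{B[N/x]}\ (M\ N)\ (M'\ N')$. Reconciling the two requires the key \emph{substitution lemma}, namely that the translation commutes with substitution, $\paramt{B[N/x]} \conv \paramt{B}[N/x, N'/x', \paramt{N}/x_R]$, together with the companion fact $(u[N/x])' \conv u'[N'/x']$ for priming. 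I expect to prove this lemma by a separate induction on the term substituted into, carefully tracking how each source variable $x$ is tripled into $x, x', x_R$ by the translation; this triplication is what makes the bookkeeping error-prone. Finally, the conversion rule forces a second auxiliary lemma, that the translation preserves definitional equality, i.e. $S \conv T$ implies $\paramt{S} \conv \paramt{T}$, which itself follows from the substitution lemma since $\conv$ is generated by $\beta$-reduction. With these two lemmas available, every case closes up to $\conv$ and the induction goes through.
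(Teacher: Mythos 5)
Your proposal is correct and takes essentially the same route as the paper: the paper's proof is exactly a structural induction on the typing judgment, with the details deferred to the cited work of Keller and Lasson, and your argument is a faithful reconstruction of that standard induction, correctly isolating the two auxiliary lemmas (commutation of the translation with substitution, and preservation of definitional equality) on which the application and conversion cases hinge.
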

\begin{proof} By structural induction on the typing judgment, see for instance~\cite{DBLP:conf/csl/KellerL12}.
\end{proof}
A key, albeit mundane ingredient of Theorem~\ref{thm:abs-raw}
is the fact that the rules of Figure~\ref{fig:rawparam} ensure that:
\begin{equation}\label{eq:rawparam}
\vdash \paramt{\Type{i}}\ :\ \paramt{\Type{i+1}}\ \Type{i}\ \Type{i}
\end{equation}
This translation precisely generates the statements expected from a
parametric type family or program. For instance, the translation of a
$\Pi$-type, given by Equation~\ref{eq:rawparampi}, is a type of
relations on functions that relate those producing related outputs
from related inputs. Concrete implementations of this translation are
available~\cite{DBLP:conf/csl/KellerL12,tassi:hal-01897468}; they generate and prove parametricity properties for type
families or for constants, improved induction schemes,~etc.

Univalent parametricity follows from the observation that the abstraction theorem still holds when restricting to relations
that are in fact (heterogeneous) equivalences. This however requires some care
in the translation of universes:
\begin{multline}\label{eqn:univparamdef}
  \paramt{\Type{i}}\ A\ B \enspace \triangleq \enspace \Sigma (R : A \rightarrow B
  \rightarrow \Type{i}) (e : A \tequiv B).\\ \Pi (a : A) (b : B).\,R\ a\ b
  \tequiv (a =\,\downarrow_e b)
\end{multline}
where $\paramt{\cdot}$ now refers to the \emph{univalent} parametricity
translation, replacing the notation introduced for the raw variant.
For any two types $A$ and $B$, $\paramt{\Type{i}}\ A\ B$ packages a relation $R$ and an equivalence $e$ such that $R$
is equivalent to the functional relation associated with $\downarrow_e$.
Crucially, assuming univalence,
$\paramt{\Type{i}}$ is equivalent to type equivalence, that is, for any two
types $A$ and $B$:
\[\paramt{\Type{i}}\ A\ B \tequiv (A \tequiv B).\]

This observation is actually an instance of a more general technique
available for constructing syntactic models of type
theory~\cite{DBLP:conf/cpp/BoulierPT17}, based on attaching extra
intensional specifications to negative type constructors. In these models, a
standard way to recover the abstraction theorem consists of
refining the translation into two variants, for any term $T : \Type{i}$,
that is also a type. The translation of such a $T$ as a \emph{term}, denoted
$\paramtm{T}$, is a dependent pair, which equips a relation
with the additional data prescribed by the interpretation
$\paramt{\Type{i}}$ of the universe. The translation $\paramt{T}$ of
$T$ as a \emph{type} is the relation itself, that is, the
projection of the dependent pair $\paramtm{T}$ onto its first
component, denoted $\relofparamtm{T}$. We refer to the original
article~\cite[Figure~4]{univparam2} for a complete description of
the translation.

We now state the abstraction theorem of the univalent parametricity translation~\cite{univparam2}, where
  $\vdash_u$ denotes a typing judgment of \CComega{} assuming univalence:
\begin{theorem}\label{thm:uabs}
  If $\tyju{\Gamma}{t}{T}$ then
  $\paramt{\Gamma}\vdash_u \paramtm{t}\ :\ \paramt{T}\ t\ t'$.
\end{theorem}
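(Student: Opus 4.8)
The plan is to proceed by structural induction on the derivation of $\tyju{\Gamma}{t}{T}$, following the proof of the raw abstraction theorem (Theorem~\ref{thm:abs-raw}) but now carrying along the extra equivalence and coherence data that the term translation $\paramtm{\cdot}$ attaches to every relation. At each step the induction hypothesis supplies both the relational component $\relofparamtm{\cdot}$ and the packaged witnesses prescribed by Equation~\ref{eqn:univparamdef}. As in the raw case, the whole argument hinges on a single pivotal lemma, the univalent counterpart of Equation~\ref{eq:rawparam}, namely that $\paramtm{\Type{i}}$ is itself a well-typed inhabitant of $\paramt{\Type{i+1}}\ \Type{i}\ \Type{i}$; once this is secured the universe introduction rule is discharged and every other case is essentially structural.

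First I would handle the variable, application and $\lambda$-abstraction cases. These are unchanged from the raw translation: the variable case is immediate from the context translation, which already provides $x_R : \paramt{A}\ x\ x'$, while application and abstraction follow by plugging the induction hypotheses into the corresponding clauses of the translation, the only difference being that the equivalence and coherence fields are transported componentwise alongside the relational part. No appeal to univalence is needed here.

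The first genuinely new case is the dependent product $\Pi x : A.\,B$, where $\paramtm{\Pi x : A.\,B}$ must inhabit $\paramt{\Type{i}}\ (\Pi x : A.\,B)\ (\Pi x' : A'.\,B')$. Beyond the relational component, which relates functions sending related inputs to related outputs exactly as in Equation~\ref{eq:rawparampi}, I must also build an equivalence $(\Pi x : A.\,B) \tequiv (\Pi x' : A'.\,B')$ together with its coherence witness. The equivalence is obtained from the equivalences on $A$ and on the fibres of $B$ furnished by the induction hypotheses, via the standard fact that type equivalences are closed under the formation of dependent function types; the coherence field is then assembled from those of $A$ and $B$. This relies only on Lemma~\ref{lem:adj} and routine closure properties of $\tequiv$, and still avoids univalence.

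The crux, and the main obstacle, is the universe case, the unique point at which univalence is invoked. It reduces to the pivotal lemma that $\paramtm{\Type{i}}$ is well-typed, \ie $\vdash_u \paramtm{\Type{i}} : \paramt{\Type{i+1}}\ \Type{i}\ \Type{i}$. Unfolding the target with Equation~\ref{eqn:univparamdef} at level $i+1$ and taking the identity equivalence for its $e$-component, so that $\downarrow_e$ is the identity, this demands exhibiting for all $A, B : \Type{i}$ an equivalence $\paramt{\Type{i}}\ A\ B \tequiv (A = B)$. I already have $\paramt{\Type{i}}\ A\ B \tequiv (A \tequiv B)$ from the discussion following Equation~\ref{eqn:univparamdef}, and univalence supplies $(A = B) \tequiv (A \tequiv B)$; composing these and symmetrizing yields the required coherence witness. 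The delicate part is that one must verify the coherence condition itself, not merely the underlying relation: this is exactly the ingredient absent from the raw translation, and the reason univalence is forced. With this lemma established, the structural induction above completes the proof.
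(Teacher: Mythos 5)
Your proposal is correct and takes essentially the same route as the paper, which does not spell out a proof but recalls the theorem from the original univalent parametricity article and highlights exactly your crux: the induction is structural in all cases except universes, where one must establish Equation~\ref{eq:uparam}, and univalence enters precisely to prove $\paramt{\Type{i}}\ A\ B \tequiv (A = B)$ by composing $\paramt{\Type{i}}\ A\ B \tequiv (A \tequiv B)$ with the univalence equivalence $(A = B) \tequiv (A \tequiv B)$. One minor imprecision: in your $\Pi$ case the ``routine closure properties of $\tequiv$'' (and the coherence field) silently require function extensionality, which is available under $\vdash_u$, so that case avoids full univalence but not funext.
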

Note that proving the abstraction theorem~\ref{thm:uabs} involves
in particular proving that:
\begin{equation}\label{eq:uparam}
  \vdash_u \paramtm{\Type{i}}\ :\ \paramt{\Type{i+1}}\ \Type{i}\ \Type{i} \quad \textrm{and} \quad \relofparamtm{\Type{i}} \conv \paramt{\Type{i}}.
\end{equation}
The definition of relation
$\paramtm{\Type{i}}$ relies on the univalence principle in a crucial way, in order
to prove that the relation in the universe is equivalent to equality on
the universe, \ie to prove that:
\[\vdash_u \Pi A\,B : \Type{i}.\,\paramt{\Type{i}}\ A\ B\tequiv (A = B).\]
Importantly, this univalent parametricity translation can be seamlessly
extended so as to also make use of a global context of user-defined
equivalences.

Yet because of the interpretation of universes given by Equation~\ref{eqn:univparamdef}, univalent parametricity can only automate proof transfer based on \emph{type equivalences}. This is too strong a requirement in many cases, \eg to deduce properties of natural numbers from that of integers, or more generally for refinement relations. Even in the case of equivalent types, this restriction may be problematic, as Equation~\ref{eq:uparam} may incur unnecessary dependencies on the univalence axiom, as in Example~\ref{ex:elim}.


\section{Type equivalence in kit}\label{sec:disassembling-reassembling}

In this section, we propose (\textsection~\ref{ssec:disassembling}) an
 equivalent, modular presentation of type equivalence, phrased as a nested sigma type.
Then (\textsection~\ref{ssec:reassembling}), we carve a hierarchy of structures on relations out of this
dependent tuple,  selectively picking pieces. Last, we revisit (\textsection~\ref{ssec:popu}) parametricity
translations through the lens of this finer grained analysis of the relational interpretations of types.

\subsection{Disassembling type equivalence}\label{ssec:disassembling}
Let us first observe that the Definition~\ref{def:equiv}, of type
equivalence, is quite asymmetrical, although this fact is somehow swept
under the rug by the infix $A \tequiv B$ notation. First, the
data of an equivalence $e : A \tequiv B$ privileges the
left-to-right direction, as $\uparrow_e$ is directly accessible from
$e$ as its first projection, while accessing the right-to-left
transport requires an additional projection. Second, the statement of
the coherence property, which we eluded in
Definition~\ref{def:equiv}, is actually:
\[ \Pi a : A.\,\textsf{ap}_{\uparrow_e} (s\ a) =
r\ \circ\,\downarrow_e \]
where $\textsf{ap}_f (t)$ is the term $f\ u
= f\ v$, for any identity proof $t : u = v$.  This statement uses
proofs $s$ and $r$, respectively of the section and retraction
properties of $e$, but not in a symmetrical way, although swapping them
leads to an equivalent definition. This entanglement prevents
 tracing the respective roles of each direction of transport,
left-to-right or right-to-left, during the course of a given univalent
parametricity translation.  Exercise 4.2 in the HoTT
book~\cite{hottbook} however suggests a symmetrical 
definition of type equivalence, via functional relations.

\begin{definition}\label{def:hott-fun} A relation
  $R : A \rightarrow B \rightarrow \Type{i}$, is \emph{functional} when:
\[\Pi a : A.\,\iscontr{\Sigma b : B.\,R\ a\ b}\]
where for any type $T$, $\iscontr{T}$ is the standard contractibility predicate $\Sigma t : T.\,\Pi t' : T.\,t = t'$. This property is denoted $\isfun{R}$.
\end{definition}

We can now obtain an equivalent but symmetrical characterization of
type equivalence, as a functional relation whose symmetrization is also
functional.
\begin{lemma}\label{hott-exo}
  For any types $A, B : \Type{}$, type $A \tequiv B$ is equivalent to:
  \[\Sigma R : A \rightarrow B \rightarrow \Type{}.\,\isfun{R} \times \isfun{\rsym{R}}\]
where $\rsym{R} : B \rightarrow A \rightarrow \Type{}$ just swaps the arguments of relation $R : A \rightarrow B \rightarrow \Type{}$.
\end{lemma}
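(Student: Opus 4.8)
The plan is to realize the claimed type as an equivalent reformulation of the contractible-fibers characterization of equivalences, following the lines of Exercise~4.2 in~\cite{hottbook}. I work throughout with function extensionality and univalence; this is expected, since, as explained below, identifying a functional relation with the graph of a function genuinely requires turning pointwise equivalences into equalities. By Lemma~\ref{lem:adj} it would in principle be enough to exhibit an isomorphism, but the cleanest route is to assemble a chain of equivalences.

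First I would recall the standard fact that a function $f : A \rightarrow B$ is an equivalence exactly when all its fibers are contractible, \ie that $\isequiv{f}$ is equivalent to $\Pi b : B.\,\iscontr{\Sigma a : A.\,f\ a = b}$; since both sides are propositions, this follows from their logical equivalence. Unfolding Definition~\ref{def:equiv}, this already gives
\[ (A \tequiv B) \enspace \tequiv \enspace \Sigma f : A \rightarrow B.\,\Pi b : B.\,\iscontr{\Sigma a : A.\,f\ a = b}. \]
The key auxiliary step is then the equivalence
\[ \left(\Sigma R : A \rightarrow B \rightarrow \Type{}.\,\isfun{R}\right) \enspace \tequiv \enspace (A \rightarrow B), \]
whose forward map sends a functional relation $R$ together with a witness $\phi : \isfun{R}$ to the function $f_R \triangleq \lambda a.\,\fst{(\fst{(\phi\ a)})}$, reading off the $B$-component of the center of contraction of $\Sigma b : B.\,R\ a\ b$. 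Contractibility of this fiber yields, for each $a$ and $b$, a pointwise equivalence $R\ a\ b \tequiv (f_R\ a = b)$; function extensionality and univalence then upgrade this into an equality $R = \lambda a\,b.\,(f_R\ a = b)$, identifying every functional relation with the graph of its associated function. The reverse map sends $f$ to its graph $\lambda a\,b.\,(f\ a = b)$, whose functionality is just the contractibility of singletons, and the two round-trips are the identity up to this same identification (the $\isfun{}$ components being propositions, hence irrelevant).

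It then remains to combine these ingredients. Grouping the first two components of $\Sigma R : A \rightarrow B \rightarrow \Type{}.\,\isfun{R} \times \isfun{\rsym{R}}$ and transporting along the auxiliary equivalence, the residual predicate $\isfun{\rsym{R}}$ becomes $\isfun{\rsym{(\lambda a\,b.\,f\ a = b)}}$, which unfolds to exactly $\Pi b : B.\,\iscontr{\Sigma a : A.\,f\ a = b}$. Composing with the reformulation of the previous paragraph, and using that $\Sigma$ preserves fiberwise equivalences, yields
\[ \Sigma R : A \rightarrow B \rightarrow \Type{}.\,\isfun{R} \times \isfun{\rsym{R}} \enspace \tequiv \enspace \Sigma f : A \rightarrow B.\,\isequiv{f}, \]
whose right-hand side is $A \tequiv B$ by Definition~\ref{def:equiv}. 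The main obstacle is the auxiliary equivalence, and specifically the passage from the pointwise equivalence $R\ a\ b \tequiv (f_R\ a = b)$ to the genuine equality $R = \lambda a\,b.\,(f_R\ a = b)$: this is the one place where univalence is indispensable, and it is also what must be handled with care so that the $\Sigma$-reassociation and the transport of $\isfun{\rsym{R}}$ remain coherent.
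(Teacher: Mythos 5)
Your proof is correct and follows essentially the same route as the paper's: both factor through the equivalence between functions and functional relations (the paper's Lemma~\ref{lem:isfun}), regroup the sigma type by associativity, and identify the residual predicate $\isfun{\rsym{R}}$ with the contractible-fibers characterization of $\isequiv{f}$. The only differences are presentational: you run the chain of equivalences right-to-left and prove the auxiliary equivalence inline via explicit graph and center-of-contraction maps plus univalence, whereas the paper invokes Lemma~\ref{lem:isfun} (itself proved by chaining equivalences and likewise resting on univalence).
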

We sketch below a proof of this result, left as an exercise in~\cite{hottbook}. The essential argument is the following
characterization of functional relations:
\begin{lemma}\label{lem:isfun}
The type of functions is equivalent to the type of functional relations; \ie{} for any types $A, B\ :\ \Type{}$, we have $(A \rightarrow B) \, \tequiv \, \Sigma R : A \rightarrow B \rightarrow \Type{}.\,\isfun{R}$.
\end{lemma}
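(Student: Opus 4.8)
The plan is to exhibit explicit maps in both directions and show they are mutually inverse, then conclude via Lemma~\ref{lem:adj} that the resulting isomorphism is an equivalence. In the forward direction I send a function $f : A \rightarrow B$ to its graph $\lambda a\, b.\,(f\ a = b)$, paired with a proof of functionality. That this relation is functional is exactly the contractibility of singletons: for each $a$, the type $\Sigma b : B.\,(f\ a = b)$ is contractible, with center $(f\ a, \mathsf{refl})$. In the backward direction I send a functional relation, \ie a pair $(R, h)$ with $h : \isfun{R}$, to the function $\lambda a.\,\fst{(\fst{(h\ a)})}$, that is, the first projection of the center of contraction of $\Sigma b : B.\,R\ a\ b$ supplied by $h\ a$.

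It then remains to check the two round-trips. Starting from $f$, the center of contraction of $\Sigma b : B.\,(f\ a = b)$ is $(f\ a, \mathsf{refl})$ by construction, so the extracted function is pointwise, indeed definitionally (using $\eta$), $f$ again; this direction is essentially free. The other round-trip is the crux. Starting from $(R, h)$ and letting $f$ be the extracted function, I must show that the graph of $f$ equals $R$ and that the functionality proofs agree. Since $\iscontr{\cdot}$ is a mere proposition and $\Pi$ preserves propositions, $\isfun{R}$ is a proposition, so its component matches automatically once the relations coincide; it thus suffices to prove $\mathsf{graph}(f) = R$.

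The main obstacle is precisely this equality of relations. By function extensionality it reduces to showing, for all $a$ and $b$, that $(f\ a = b)$ and $R\ a\ b$ are equal types, and by univalence it is enough to produce an equivalence $(f\ a = b) \tequiv R\ a\ b$. Here I would invoke the standard characterization underlying the contractibility of $\Sigma b : B.\,R\ a\ b$: writing $(b_0, p_0)$ for its center, so that $b_0 = f\ a$, path induction yields a fiberwise map $\Pi b.\,(b_0 = b) \rightarrow R\ a\ b$ sending $\mathsf{refl}$ to $p_0$. This map induces a function between the total spaces $\Sigma b.\,(b_0 = b)$ and $\Sigma b.\,R\ a\ b$, both of which are contractible — the former as a singleton, the latter by $h\ a$ — hence an equivalence on total spaces. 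Since an equivalence on total spaces over a fiberwise transformation is a fiberwise equivalence~\cite{hottbook}, we obtain $(b_0 = b) \tequiv R\ a\ b$ for every $b$, which is exactly what is needed.

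I therefore expect the argument to go through with function extensionality together with univalence, the delicate point being the promotion of a map that is merely an equivalence on the two contractible total spaces into a fiberwise equivalence, and then into an equality of relations. Assembling the two round-trips establishes that the forward and backward maps form an isomorphism, and hence, by Lemma~\ref{lem:adj}, an equivalence $(A \rightarrow B) \tequiv \Sigma R : A \rightarrow B \rightarrow \Type{}.\,\isfun{R}$.
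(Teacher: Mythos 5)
Your proof is correct, but it follows a genuinely different route from the paper's. The paper proves the lemma by chaining two equivalences: first it commutes the quantifiers, using the type-theoretic choice principle to rewrite $\Sigma R : A \to B \to \Type{}.\,\isfun{R}$ as $A \to \Sigma P : B \to \Type{}.\,\iscontr{\Sigma b : B.\,P\ b}$, and then it applies, under the functor $A \to -$, the standard (univalence-based) equivalence between $B$ and the type of predicates on $B$ with contractible total space. There is no explicit pair of maps and no round-trip verification; everything is delegated to these two reusable equivalences. You instead build the equivalence by hand: graph map one way, center-of-contraction extraction the other way, the easy round trip by singleton contractibility and $\eta$, and the hard round trip via propositionality of $\isfun{\cdot}$, function extensionality, univalence, and the promotion of a total-space equivalence between contractible total spaces to a fiberwise equivalence (Theorem 4.7.7 of the HoTT book), finishing with Lemma~\ref{lem:adj}. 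The mathematical crux is the same in both cases --- identifying an arbitrary functional relation with the graph of its induced function, which is where univalence is unavoidable --- but the packaging differs: the paper's proof buys brevity and modularity (it reuses two library lemmas and never inspects the quasi-inverse data), while yours makes the transport maps, the exact points where funext and univalence enter, and the computational content of the equivalence fully explicit, at the cost of the round-trip bookkeeping.
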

\begin{proof} The proof goes by chaining the following equivalences:
\begin{align*}
\left(\Sigma R : A \to B \to \Type{}.\,\isfun{R}\right) \enspace & \tequiv \enspace
   \left(A \to \Sigma P : B \to \Type{}.\,\iscontr{\Sigma b : B.\,P\ b}\right) \\
    & \enspace \tequiv (A \to B)
\end{align*}
\end{proof}

\begin{proof}[of Lemma~\ref{hott-exo}]
The proof goes by chaining the following equivalences, where the type of $f$ is
always $A \to B$ and the type of $R$ is $A \to B \to \Type{}$:
\begin{align*}
(A \tequiv B) & \enspace \tequiv \enspace \Sigma f : A \to B.\,\isequiv{f}
&& \text{by definition of }(A \tequiv B)
\\
& \enspace \tequiv \enspace
   \Sigma f.\,\Pi b : B.\,\iscontr{\Sigma a. f\ a = b}
&& \text{standard result in HoTT}
\\
& \enspace \tequiv \enspace
\Sigma f.\,\isfun{\lambda (b : B) (a : A).\,f\ a = b}
&& \text{by definition of } \isfun{\cdot}
\\
& \enspace \tequiv \enspace
\Sigma\left(\varphi : \Sigma R.\,\isfun{R}\right).\,\isfun{ \pi_1(\varphi)^{-1}}
&& \text{by Lemma~\ref{lem:isfun}} \\
& \enspace \tequiv \enspace
\Sigma R.\,\isfun{R} \times \isfun{R^{-1}}
&& \text{by associativity of } \Sigma.
\end{align*}
\end{proof}

However, the definition of type equivalence provided by
Lemma~\ref{hott-exo} does not expose explicitly the two
transfer functions in its data, although this computational content can
be extracted via first projections of contractibility proofs. In
fact, it is possible to devise a definition of type equivalence which
directly provides the two transport functions in its data, while
remaining symmetrical. This variant follows from 
an alternative characterization of functional relations.

\begin{definition}\label{def:umap} For any types $A, B : \Type{}$, a relation $R : A \rightarrow B \rightarrow
  \Type{}$, is a \emph{univalent map}, denoted $\isumap{R}$ when
  there exists a function $m : A \rightarrow B$ together with:
  \begin{flalign*}
    & g_1 : \Pi (a : A) (b : B).\,m\ a = b \to R\ a\ b \\
    \text{and} \enspace & g_2 : \Pi (a : A) (b : B).\,R\ a\ b \to m\ a = b \\
    \text{such that} \enspace &
   \Pi (a : A) (b : B).\,(g_1\ a\ b) \circ (g_2 \ a\ b) \ptweq id.
  \end{flalign*}
\end{definition}
Now comes the crux \trocqcode{theories/Uparam.v\#L73-L74}{lemma} of this section.
\begin{lemma} For any types $A, B : \Type{}$ and any relation $R : A \rightarrow B \rightarrow \Type{}$
\[\isfun{R} \tequiv \isumap{R}.\]
\end{lemma}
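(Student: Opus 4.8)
The plan is to prove this equivalence the same way Lemmas~\ref{lem:isfun} and~\ref{hott-exo} are proved, namely by chaining off-the-shelf equivalences so as never to check a coherence or an inverse law by hand. The whole argument takes place in the ambient \HoTT{} setting, so function extensionality and the type-theoretic axiom of choice (the equivalence $\Sigma (f : \Pi a.\,X\ a).\,\Pi a.\,Y\ a\ (f\ a) \tequiv \Pi a.\,\Sigma (x : X\ a).\,Y\ a\ x$) are freely available.

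First I would peel off the outer quantification on $a : A$. We have $\isfun{R} = \Pi a : A.\,\iscontr{\Sigma b : B.\,R\ a\ b}$, and, applying the choice principle iteratively to the components $m$, $g_1$, $g_2$ and the coherence of $\isumap{R}$, we also get $\isumap{R} \tequiv \Pi a : A.\,(\dots)$. Since $\Pi$ preserves equivalences, it suffices to fix $a$, write $P := R\ a : B \to \Type{}$, and establish
\[
\begin{aligned}
&\Sigma (b_0 : B).\,\Sigma (g_1 : \Pi b.\,(b_0 = b) \to P\ b).\,\Sigma (g_2 : \Pi b.\,P\ b \to (b_0 = b)).\\
&\qquad\qquad \Pi b.\,(g_1\ b) \circ (g_2\ b) \ptweq id
\quad \tequiv \quad \iscontr{\Sigma b : B.\,P\ b}.
\end{aligned}
\]

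The pivotal step is the based-path (Yoneda) equivalence $(\Pi b.\,(b_0 = b) \to P\ b) \tequiv P\ b_0$, given by evaluation at $(b_0, \mathsf{refl})$, whose inverse sends $\rho : P\ b_0$ to $\lambda b\,p.\,\mathsf{transport}^{P}\ p\ \rho$. Re-indexing the dependent sum along this equivalence replaces the bound variable $g_1$ by a point $\rho : P\ b_0$, and after $\beta$-reduction rewrites the coherence $\Pi b\,(y : P\ b).\,g_1\ b\ (g_2\ b\ y) = y$ into $\Pi b\,(y : P\ b).\,\mathsf{transport}^{P}\ (g_2\ b\ y)\ \rho = y$. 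A second use of choice merges $g_2$ with this condition into $\Pi (b : B)(y : P\ b).\,\Sigma (p : b_0 = b).\,\mathsf{transport}^{P}\ p\ \rho = y$. I would then recognise $\Sigma (p : b_0 = b).\,\mathsf{transport}^{P}\ p\ \rho = y$ as the standard characterisation of the identity type of a total space, i.e. $((b_0, \rho) = (b, y))$, uncurry the quantification on $(b, y)$ into one on $w : \Sigma b.\,P\ b$, and read off, after associativity of $\Sigma$, exactly $\Sigma (w_0 : \Sigma b.\,P\ b).\,\Pi (w : \Sigma b.\,P\ b).\,w_0 = w = \iscontr{\Sigma b : B.\,P\ b}$, closing the chain.

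I expect the only delicate point to be this single re-indexing step: one must check that substituting $\lambda b\,p.\,\mathsf{transport}^{P}\ p\ \rho$ for $g_1$ in the residual $\Sigma g_2$-block genuinely yields the transport form of the coherence, i.e. that the computation rules line up. Everything else — choice, the total-space path lemma, and the unfolding of $\iscontr{\cdot}$ — is a ready-made equivalence, so the proof is essentially bookkeeping once this step is in place.
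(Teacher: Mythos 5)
Your proof is correct, and it is in the same spirit as the paper's — a chain of off-the-shelf equivalences (type-theoretic choice, the characterization of paths in $\Sigma$-types, singleton contractibility) arranged so that no coherence law is ever checked by hand — but it is organized in a genuinely different way, and the difference is instructive. The paper starts from $\isfun{R}$, keeps the outer $\Pi x$ throughout, and repeatedly skolemizes: it extracts the function $f$, then $r : \Pi x.\,R\ x\ (f\ x)$, then $e : \Pi x\,\Pi y.\,R\ x\ y \to f\ x = y$, ending in a ``based'' form whose final comparison with $\isumap{R}$ — precisely the equivalence between the block $\Sigma (g_1 : \Pi x\,\Pi y.\,f\ x = y \to R\ x\ y)$ carrying the composition coherence and the block $\Sigma (r : \Pi x.\,R\ x\ (f\ x))$ carrying the transport-form coherence — is exactly the one step it declines to detail, deferring it to the formal development. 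You go the opposite way: you un-skolemize $\isumap{R}$ into a pointwise statement over $a : A$, and you make that deferred step the explicit pivot, namely re-indexing the $\Sigma g_1$ along the based-path equivalence $(\Pi b.\,(b_0 = b) \to P\ b) \tequiv P\ b_0$ given by evaluation at $\mathsf{refl}$, with transport as inverse; after that, choice, the total-space path lemma $\Sigma (p : b_0 = b).\,\mathsf{transport}^{P} p\ \rho = y \,\tequiv\, ((b_0,\rho) = (b,y))$, uncurrying, and reassociation land you exactly on $\iscontr{\Sigma b.\,P\ b}$. All of these steps are sound (the re-indexing you flag as delicate is just transport of a $\Sigma$-family along an equivalence, and the substitution computes by $\beta$), so the argument is complete. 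What your order buys is that the single genuinely nontrivial point — turning the coherence $(g_1\ b) \circ (g_2\ b) \ptweq id$ on $P\ b$ into its transport form — is isolated and justified rather than elided, and the pointwise reduction keeps every intermediate statement free of skolem functions, which is arguably cleaner. What the paper's order buys is that its intermediate forms stay syntactically close to the record structure of $\isumap{R}$ (its skolemized $f$, $r$, $e$ are the fields $m$, the value of $g_1$ at $\mathsf{refl}$, and $g_2$), which is the shape their library and meta-programs actually consume.
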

\begin{proof}
The proof goes by rewording the left hand side, in the following way:
\begin{align*}
  \Pi x. & \,\iscontr{R\ x} \\
    & \tequiv
    \Pi x.\,\Sigma (r : \Sigma y.\,R\ x\ y).\,\Pi(p : \Sigma y.\,R\ x\ y).\,r = p \\
    & \tequiv
    \Pi x.\,\Sigma y.\,\Sigma (r : R\ x\ y).\,\Pi(p : \Sigma y.\,R\ x\ y).\,(y, r) = p \\ & \tequiv
    \Sigma f.\,\Pi x.\,\Sigma (r : R\ x\ (f\ x)).\,\Pi(p : \Sigma y.\,R\ x\ y).\,(f\ x, r) = p \\
    & \tequiv
    \Sigma f.\,\Sigma (r : \Pi x.\,R\ x\ (f\ x)).\,\Pi x.\,\Pi(p : \Sigma y.\,
    R\ x\ y).\,(f\ x, r\ x) = p \\
    & \tequiv
    \Sigma f.\,\Sigma r.\,\Pi x.\,\Pi y.\,\Pi(p :
    R\ x\ y).\,(f\ x, r\ x) = (y, p) \\
    & \tequiv
    \Sigma f.\,\Sigma r.\,\Pi x.\,\Pi
    y.\,\Pi(p : R\ x\ y).\,\Sigma (e : f\ x = y).\,r\ x =_e p \\
    & \tequiv
    \Sigma f.\,\Sigma r.\,
    \Sigma (e : \Pi x.\,\Pi y.\,R\ x\ y \to f\ x = y).\,
      \Pi x.\,\Pi y.\,\Pi p.\,(r\ x) =_{e\,x\,y\,p} p
\end{align*}
After a suitable reorganization of the sigma types we are left to show that
\begin{align*}
  &\Sigma (r : \Pi x.\,\Pi y.\,f\ x = y \to R\ x\ y).\,(e\,x\,y) \circ (r\,x\,y) \ptweq id \\
  \tequiv \enspace & \Sigma (r : \Pi x.\,R\ x\ (f\ x)).\,\Pi x.\,\Pi y.\,\Pi p.\,r\ x =_{e\,x\,y\,p} p
\end{align*}
which proof we do not detail, referring the reader to the \trocqcode{theories/Uparam.v}{supplementary material}.
\end{proof}
As a direct corollary, we obtain a novel characterization of type
equivalence:
\begin{theorem}\label{thm:uparam-equiv}
  For any types $A, B : \Type{i}$, we have:
  \[(A \tequiv B) \tequiv \ParamRec{\top}\ A\ B\]
where the relation $\ParamRec{\top}\ A\ B$ is defined as:
  \[\Sigma R : A \rightarrow B \rightarrow \Type{i}.\,\isumap{R} \times \isumap{\rsym{R}}\]
\end{theorem}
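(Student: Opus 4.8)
The plan is to read the statement off as a direct corollary of the two preceding lemmas, using only the standard closure properties of type equivalence. By Lemma~\ref{hott-exo}, the type $A \tequiv B$ is already equivalent to $\Sigma R : A \to B \to \Type{i}.\,\isfun{R} \times \isfun{\rsym{R}}$, so by transitivity of $\tequiv$ it suffices to exhibit an equivalence
\[
\left(\Sigma R.\,\isfun{R} \times \isfun{\rsym{R}}\right) \tequiv \left(\Sigma R.\,\isumap{R} \times \isumap{\rsym{R}}\right),
\]
where $R$ ranges over $A \to B \to \Type{i}$ and the right-hand side is, by definition, exactly $\ParamRec{\top}\ A\ B$. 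This reduces the theorem to a single $\Sigma$-congruence over the fixed base $A \to B \to \Type{i}$.

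First I would build, for every fixed relation $R$, a fibre-wise equivalence between the two predicates attached to $R$. The crux lemma just established, namely $\isfun{R} \tequiv \isumap{R}$, gives the first factor directly. For the second factor I would instantiate that same lemma on the relation $\rsym{R} : B \to A \to \Type{i}$, swapping the roles of $A$ and $B$; since the lemma is quantified over arbitrary types and arbitrary relations this is immediate, and it yields $\isfun{\rsym{R}} \tequiv \isumap{\rsym{R}}$. Combining the two via closure of $\tequiv$ under cartesian products produces the fibre-wise equivalence $(\isfun{R} \times \isfun{\rsym{R}}) \tequiv (\isumap{R} \times \isumap{\rsym{R}})$. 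It then remains to lift this family to the total spaces, for which I would invoke the standard \HoTT{} result that the map induced on $\Sigma$-types by a fibre-wise equivalence over a fixed base is itself an equivalence. Applying it over the base $A \to B \to \Type{i}$ and composing with Lemma~\ref{hott-exo} closes the argument.

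There is essentially no hard step left: all the genuine content has been discharged in the crux lemma, and what remains merely assembles it through transitivity of $\tequiv$, closure under products, and $\Sigma$-congruence. The only point deserving care is the second instantiation of the crux lemma on $\rsym{R}$: one must check that the type it produces there is indeed $\isumap{\rsym{R}}$, i.e. that the bookkeeping of which type plays the role of the domain, and the meaning of the $\rsym{\cdot}$ notation, stay consistent between the two applications. Once this is verified, the chain of equivalences is purely mechanical.
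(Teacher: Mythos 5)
Your proof is correct and follows essentially the same route as the paper, which presents Theorem~\ref{thm:uparam-equiv} as a direct corollary of the crux lemma $\isfun{R} \tequiv \isumap{R}$ combined with Lemma~\ref{hott-exo}. You have simply made explicit the bookkeeping the paper leaves implicit: the fibre-wise instantiation of the crux lemma on both $R$ and $\rsym{R}$, closure of $\tequiv$ under products, $\Sigma$-congruence over the fixed base $A \to B \to \Type{i}$, and transitivity.
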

The collection of data packed in a term of type $\ParamRec{\top}\ A\ B$ is now symmetrical, as the right-to-left
direction of the equivalence based on univalent maps can be obtained from the left-to-right by
flipping the relation and swapping the two functionality proofs. If the
$\eta$-rule for records is verified, symmetry is even \emph{definitionally}\ involutive.

\subsection{Reassembling type equivalence}\label{ssec:reassembling}
\label{ssec:pproof-box}

Definition~\ref{def:umap} of univalent maps and the resulting rephrasing of type equivalence
suggest introducing a hierarchy of structures for heterogeneous relations, which explains how
close a given relation is to type equivalence. In turn, this distance is described in terms of
structure available respectively on the left-to-right and right-to-left transport functions.

\begin{definition}\label{def:prels} For $n, k \in\{0, 1, 2_{\txta}, 2_{\txtb}, 3, 4\}$, and $\alpha=(n,k)$,
  relation $\ParamRec{\alpha}\ :\ \Type{} \to \Type{} \to \Type{}$, is defined as:
  \[\ParamRec{\alpha} \triangleq \lambda (A\ B\, :\, \Type{}).\Sigma(R\, :\, A \to B \to \Type{}).\ParamClass{\alpha}\ R\]
where the \emph{map class} $\ParamClass{\alpha}\ R$ itself unfolds to a pair type $(\ParamM{n}\ R) \times (\ParamM{k}\ \rsym{R})$,
with $\ParamM{i}$ defined as:\footnote{For the sake of readability, we omit implicit arguments, \eg{} although
$\ParamM{i}$ has type $\lambda (T_1\ T_2 \,:\, \Type{}).\,(T_1 \to T_2 \to \Type{}) \to \Type{}$, we write $\ParamM{n}\ R$ for $(\ParamM{n}\ A\ B\ R)$.}
\begin{align*}
\ParamM{0}\ R \triangleq &\ .\\
\ParamM{1}\ R \triangleq &\ (A \to B)\\
\ParamM{2_{\txta}}\ R \triangleq &\ \Sigma m : A \to B.\, G_{2_{\txta}}\ m\ R\quad \textrm{ with } G_{2_{\txta}}\ m\ R\ \triangleq \Pi a\,b.\,m\ a = b \to R\ a\ b \\
\ParamM{2_{\txtb}}\ R \triangleq &\ \Sigma m : A \to B.\,G_{2_{\txtb}}\ m\ R\quad \textrm{ with } G_{2_{\txtb}}\ m\ R\ \triangleq \Pi a\,b.\,R\ a\ b \to m\ a = b\\
\ParamM{3}\ R \triangleq &\ \Sigma m : A \to B.\,(G_{2_{\txta}}\ m\ R)\times (G_{2_{\txtb}}\ m\ R)\\
\ParamM{4}\ R \triangleq &\ \Sigma m : A \to B.\,\Sigma (g_1 : G_{2_{\txta}}\ m\ R).\,\Sigma (g_2 : G_{2_{\txtb}}\ m\ R).\,\Pi a\,b.\\
&\ (g_1\ a\ b) \circ (g_2\ a\ b) \ptweq id
\end{align*}
For any types $A$ and $B$, and any $r : \ParamRec{\alpha}\ A \ B$
we use notations $\Rel{r}$, $\map{r}$ and $\comap{r}$ to refer
respectively to the relation, map of type $A \to B$, map of type $B
\to A$, included in the data of $r$, for a suitable $\alpha$.
\end{definition}

\begin{definition}
  We denote $\ann$ the set $\{0, 1, 2_{\txta}, 2_{\txtb}, 3, 4\}^2$, used to index map classes in
  Definition~\ref{def:prels}. This set is partially ordered for the product order defined from the partial order
  $0 < 1 < 2_{*} < 3 < 4$ for $2_{*}$ either $2_\txta$ or $2_\txtb$, and with $2_\txta$ and $2_\txtb$ being incomparable.

\end{definition}
\begin{remark}
Relation $\ParamRec{(4,4)}$ of Definition~\ref{def:prels} coincides with the relation $\ParamRec{\top}$ introduced in
Theorem~\ref{thm:uparam-equiv}. Similarly, we denote $\ParamRec{\bot}$ the relation $\ParamRec{(0,0)}$.
\end{remark}

\begin{remark}


Definition~\ref{def:prels} is associated with the following dictionary. For $r$ of type:
\begin{itemize}
  \item $\ParamRec{(1,0)}\ A\ B$, $\map{r}$ is an arbitrary function $f : A \to B$;
  \item $\ParamRec{(4,0)}\ A\ B$, $\Rel{r}$ is a univalent map, in the sense of Definition~\ref{def:umap};
  \item $\ParamRec{(4,2_{\txta})}\ A\ B$, $\Rel{r}$ is the graph of a retraction (\ie{} a surjective univalent map with an explicit partial left inverse) of type $A \to B$;
  \item $\ParamRec{(4,2_{\txtb})}\ A\ B$,  $\Rel{r}$ is the graph of a section (\ie{} an injective univalent map with explicit partial right inverse) of type $A \to B$;
  \item $\ParamRec{(4,4)}$ , $r$ is an equivalence between $A$ and $B$;
  \item $\ParamRec{(3,3)}$, $r$ is an isomorphism between $A$ and $B$.
\end{itemize}
Observe that $\ParamRec{(n,m)}\ A\ B$ coincides, up to equivalence, with $\ParamRec{(m, n)}\ B\ A$. Other classes, while not corresponding to a meaningful mathematical definition, may arise in concrete runs of proof transfer: see also Section~\ref{sec:strat-param-rules} for explicit examples.

\end{remark}

The corresponding lattice to the collection of
$\ParamM{n}$ is implemented as a hierarchy of dependent tuples, more precisely,
of record types.



\subsection{Populating the hierarchy of relations}\label{ssec:popu}
We shall now revisit the parametricity translations of Section~\ref{ssec:univparam}. In particular,
combining Theorem~\ref{thm:uparam-equiv} with Equation~\ref{eq:uparam}, crux of the
abstraction theorem for univalent parametricity, ensures the existence of a term $p_{\Type{i}}$
such that:
\[\vdash_u p_{\Type{i}} \ :\ \ParamRec{\top}_{i+1}\ \Type{i}\ \Type{i} \quad \textrm{and} \quad \Rel{p_{\Type{i}}} \tequiv \ParamRec{\top}_{i}.\]
Otherwise said, relation $\ParamRec{\top}\, :\, \Type{} \to \Type{} \to \Type{}$ can be endowed with a
$\ParamRec{\top}$ structure, assuming univalence. Similarly, Equation~\ref{eq:rawparam}, for the
raw parametricity translation, can be read as the fact that
relation  $\ParamRec{\bot}$ on universes can be endowed with a
$\ParamRec{\bot}\ \Type{}\ \Type{}$ structure.

Now the hierarchy of structures introduced by Definition~\ref{def:prels} enables
a finer grained analysis of the possible relational interpretations of universes.
Not only would this put the raw and univalent parametricity translations under the same hood,
but it would also allow for generalizing parametricity to a larger class of relations.
For this purpose, we generalize the previous observation, on the key ingredient for
translating universes: for each $\alpha\in\ann$,
relation $\ParamRec{\alpha} :\ \Type{} \to \Type{} \to \Type{}$ may be endowed with
several structures from the lattice, and we need to study which ones, depending on~$\alpha$.
Otherwise said, we need to identify the pairs
$(\alpha, \beta)\in\ann^2$ for which it is possible to construct a term
$\pproof_\Box^{\alpha, \beta}$ such that:
\begin{equation}\label{eq:combi}
\vdash_u \pproof_\Box^{\alpha, \beta} \ :\ \ParamRec{\beta}\ \Type{}\ \Type{} \quad \textrm{and} \quad \Rel{p_{\Type{}}^{\alpha, \beta}} \conv \ParamRec{\alpha}
\end{equation}
Note that we aim here at a definitional equality between $\Rel{\pproof_\Box^{\alpha, \beta}}$ and
$\ParamRec{\alpha}$, rather than at an equivalence. It is easy to see that a term
$\pproof_\Box^{\alpha, \bot}$  exists for any $\alpha\in\ann$, as $\ParamRec{\bot}$ requires no structure
on the relation. On the other hand, it is not possible to construct a term $\pproof_\Box^{\bot, \top}$,
\ie{} to turn an arbitrary relation into a type equivalence.

\begin{definition}\label{def:dbox} We denote $\dep_\Box$ the following subset of $\ann^2$:
\[\dep_{\Box} = \{(\alpha, \beta)\in \ann^2\ |\ \alpha = \top \vee \beta \in \{0, 1, 2_{\txta}\}^2\}\]

\end{definition}

The \trocqcode{theories/Param_Type.v}{supplementary material} constructs
terms $\pproof_\Box^{\alpha, \beta}$ for every pair $(\alpha,\beta)\in\dep_\Box$, using a meta-program
to generate them from a minimal collection of manual definitions. In particular, assuming univalence, it is possible to
construct a term $p_{\Type{}}^{\top,\top}$, which can be seen as an analogue of the translation
\paramtm{\Type{}} of univalent parametricity. More generally, the provided terms
$\pproof_\Box^{\alpha, \beta}$ depend on univalence if and only if $\beta \notin \{0, 1, 2_{\txta}\}^2$.

The next natural question concerns the possible structures
$\ParamRec{\gamma}$ endowing the relational interpretation of a product type $\Pi x : A.\, B$, given relational interpretation for types $A$ and $B$
 respectively equipped with structures $\ParamRec{\alpha}$ and $\ParamRec{\beta}$.

Otherwise said, we need to identify the triples $(\alpha,\beta,\gamma)\in \ann^3$ for which it
is possible to construct a term $p_{\Pi}^{\gamma}$ such that the following statements both hold:
\[
\inferrule{\Gamma \vdash A_R\ :\ \ParamRec{\alpha}\ A\ A' \\
\Gamma,\ x : A,\ x' : A',\ x_R : A_R\ x\ x' \,\vdash\, B_R\ :\ \ParamRec{\beta}\ B\ B'
}{\Gamma\;\vdash\; p_{\Pi}^{\gamma}\ A_R\ B_R\ :\ \ParamRec{\gamma}\ (\Pi x : A.\, B)\ (\Pi x' : A'.\, B')}\]
\[\Rel{p_{\Pi}^{\gamma}\ A_R\ B_R} \conv \lambda f. \lambda f'. \Pi(x:A)(x':A')(x_R : \Rel{A_R}\ x\ x').\ \Rel{B_R}\ (f x)\ (f' x')\]

The corresponding collection of triples can actually be described as a function $\dep_\Pi : \ann \to \ann^2$, such that
$\dep_\Pi(\gamma) = (\alpha, \beta)$ provides the \emph{minimal} requirements on the structures
associated with $A$ and $B$, with respect to the partial order on $\ann^2$. The \trocqcode{theories/Param_forall.v}{supplementary material} provides a corresponding collection of terms $\pproof_\Pi^{\gamma}$
for each $\gamma\in\ann$, as well as all the associated weakenings. Once again, these definitions are
generated by a meta-program. Observe in particular that by symmetry,
$\pproof_\Pi^{(m, n)}$ can be obtained from $\pproof_\Pi^{(m, 0)}$ and $\pproof_\Pi^{(n, 0)}$
by swapping the latter and glueing it to the former. Therefore, the values of
$\pproof_\Pi^\gamma$ and $\dep_\Pi(\gamma)$ are completely determined by those of
$\pproof_\Pi^{(m, 0)}$ and $\dep_\Pi(m, 0)$.
In particular, for any $(m, n)\in \ann$:
\[ \dep_\Pi(m, n) = \left((m_A, n_A), (m_B, n_B)\right) \]
where $m_A, n_A, m_B, n_B\in \ann$ are such that
$\dep_\Pi(m, 0) = \left((0, n_A), (m_B, 0)\right)$ and $\dep_\Pi(n, 0) = \left((0, m_A), (n_B, 0)\right)$.
We sum up in Figure~\ref{fig:dep-pi} the values of $\dep_\Pi(m, 0)$.

\begin{figure}[h]
\begin{center}
\begin{minipage}{0.4\textwidth}
\begin{center}
\begin{tabular}{ |c|c|c| }
\hline
$m$ & $\dep_\Pi(m, 0)_1$ & $\dep_\Pi(m, 0)_2$ \\
\hline
$0$ & $(0,0)$ & $(0,0)$ \\
\hline
$1$ & $(0,2_{\txta})$ & $(1,0)$ \\
\hline
$2_{\txta}$ & $(0,4)$ & $(2_{\txta},0)$ \\
\hline
$2_{\txtb}$ & $(0,2_{\txta})$ & $(2_{\txtb},0)$ \\
\hline
$3$ & $(0,4)$ & $(3,0)$ \\
\hline
$4$ & $(0,4)$ & $(4,0)$ \\
\hline
\end{tabular}
\end{center}
\end{minipage}
\begin{minipage}{0.4\textwidth}
\begin{center}
\begin{tabular}{ |c|c|c| }
  \hline
  $m$ & $\dep_\rightarrow(m, 0)_1$ & $\dep_\rightarrow(m, 0)_2$ \\
  \hline
  $0$ & $(0,0)$ & $(0,0)$ \\
  \hline
  $1$ & \cellcolor{gray!50}$(0,1)$ & $(1,0)$ \\
  \hline
  $2_{\txta}$ & \cellcolor{gray!50} $(0,2_{\txtb})$ & $(2_{\txta},0)$ \\
  \hline
  $2_{\txtb}$ & $(0,2_{\txta})$ & $(2_{\txtb},0)$ \\
  \hline
  $3$ & \cellcolor{gray!50}$(0,3)$ & $(3,0)$ \\
  \hline
  $4$ & $(0,4)$ & $(4,0)$ \\
  \hline
  \end{tabular}
\end{center}
\end{minipage}
\end{center}
\caption{Minimal dependencies for product and arrow types}\label{fig:dep-pi}\label{fig:dep-arrow}
\end{figure}

Note that in the case of a
non-dependent product, constructing
$\pproof_\to^\gamma$ requires less structure on the domain $A$ of an arrow type
$A \to B$, which motivates the introduction of function $\dep_\to(\gamma)$. Using the combinator
for dependent products to interpret an arrow type, albeit correct, potentially pulls in
unnecessary structure (and axiom) requirements. The \trocqcode{theories/Param_arrow.v}{supplementary material} includes a construction of terms
$\pproof_\to^\gamma$ for any $\gamma\in\ann$.

The two tables in Fig.\ref{fig:dep-pi} show how
requirements on levels stay the same on the right hand side of both $\Pi$ and
$\to$, stay the same up to symmetries (exchange of variance and of $2_\txta$
and $2_\txtb$) on the left hand side of a $\to$ and increase on the left hand
side of a $\Pi$. This elegant arithmetic justifies our hierarchy of relations.

\section{A calculus for proof transfer}
\label{sec:strat-param-rules}
This section introduces \Trocq{}, a framework for proof transfer designed as a
generalization of parametricity translations, so as to allow for interpreting types as
instances of the structures introduced in Section~\ref{ssec:pproof-box}. We adopt a sequent style presentation,
which fits closely the type system of \CComega, while explaining
in a consistent way the transformations of terms and contexts. This choice of presentation
departs from the standard literature about parametricity in pure type systems. Yet, it
brings the presentation closer to actual implementations, whose necessary management of
parametricity contexts is swept under the rug by notational conventions
(\eg the primes of Section~\ref{ssec:univparam}).

For this purpose, we successively introduce four calculi, of increasing sophistication. We start
(\textsection~\ref{sec:raw-param}) with introducing this sequent style
presentation by rephrasing the raw parametricity translation, and the univalent parametricity
one (\textsection~\ref{sec:univ-param-triples}). We then introduce {\CCann} (\textsection~\ref{sec:CCann}),
a Calculus of Constructions with annotations on sorts and subtyping, before defining
(\textsection~\ref{sec:trocq}) the \Trocq{} calculus.

\subsection{Raw parametricity sequents}
\label{sec:raw-param}

We introduce \emph{parametricity contexts}, under the form of a list of triples
packaging two variables $x$ and $x'$ together with a third one  $x_R$. The latter $x_R$ is a \emph{witness} (a proof) that $x$ and $x'$ are related:
\[\Xi ::= \varepsilon\ |\ \Xi,\; \param{x}{x'}{x_R}\]
We write $(x, x', x_R) \in \Xi$ when $\Xi = \Xi',\; \param{x}{x'}{x_R},\; \Xi''$ for some $\Xi'$ and $\Xi''$.

We denote $\Var(\Xi)$
the sequence of variables related in a parametricity context $\Xi$,
with multiplicities:
\[\Var(\varepsilon)= \varepsilon \quad \quad \Var(\Xi,\; \param{x}{x'}{x_R}) =
\Var(\Xi), x, x', x_R\] A parametricity context $\Xi$ is
\emph{well-formed},  written $\Xi \vdash$, if the sequence $\Var(\Xi)$ is duplicate-free.
In this case, we use the notation $\Xi(x)=(x', x_R)$ as a synonym
of $(x, x', x_R) \in \Xi$.

A \emph{parametricity judgment} relates a parametricity context $\Xi$ and three
terms $M, M', M_R$ of $\CComega$. Parametricity judgments, denoted as:
\[ \Xi \vdash \param{M}{M'}{M_R},\]
are defined by rules of Figure~\ref{fig:rel-param} and read \emph{in context $\Xi$,
term $M$ translates to the term $M'$, because $M_R$}.

\begin{figure}[h]
\begin{mathpar}
\ninferrule
  {\ }
  {\Xi \vdash \param{\Type{i}}{\Type{i}}{\lambda (A\,B : \Type{i}).\,A \to B \to \Type{i}}}
  {ParamSort}

\ninferrule
  {(x, x', x_R) \in \Xi \\ \Xi \vdash }
  {\Xi \vdash \param{x}{x'}{x_R}}
  {ParamVar}

\ninferrule
  {\Xi \vdash \param{M}{M'}{M_R} \quad \Xi \vdash \param{N}{N'}{N_R}}
  {\Xi \vdash \param{M\ N}{M'\ N'}{M_R\ N\ N'\ N_R}}
  {ParamApp}

\ninferrule
  {\Xi, \param{x}{x'}{x_R} \vdash \param{M}{M'}{M_R}}
  {\Xi \vdash \param{\lambda x : A.\,M}{\lambda x' : A'.\,M'}{\lambda x\,x'\,x_R.\,M_R}}
  {ParamLam}

\ninferrule
  {x, x' \notin \Var(\Xi) \\\\
  \Xi \vdash \param{A}{A'}{A_R} \\
  \Xi, \param{x}{x'}{x_R} \vdash \param{B}{B'}{B_R}
  }
  {\Xi \vdash \param
    {\Pi x : A.\,B}{\Pi x' : A'.\,B'}
    {\lambda f\,g.\,\Pi x\,x'\,x_R.\,B_R\ (f\ x)\ (g\ x')}}
  {ParamPi}

\end{mathpar}
\caption{{\sc Param}: sequent-style binary parametricity translation}\label{fig:rel-param}
\end{figure}

\begin{lemma}\label{lem:param-rel-func}
  The relation associating a term $M$ with pairs $(M', M_R)$ such that
  $\Xi \vdash \param{M}{M'}{M_R}$ holds, with $\Xi$ a well-formed parametricity
context, is \emph{functional}. More precisely, for any well-formed parametricity context~$\Xi$:
\begin{align*}
  \forall M, M', N', M_R, N_R, \quad &
 \Xi \vdash \param{M}{M'}{M_R}\quad \wedge\quad \Xi \vdash \param{M}{N'}{N_R}\\ & \implies\ (M', M_R) = (N', N_R)
\end{align*}
\end{lemma}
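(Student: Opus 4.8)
The plan is to argue by structural induction on the term $M$, establishing the statement for \emph{every} well-formed context $\Xi$ simultaneously. The key preliminary observation is that the rules of Figure~\ref{fig:rel-param} are syntax-directed in $M$: the head constructor of $M$ (a sort, a variable, an application, a $\lambda$-abstraction, or a $\Pi$-type) determines uniquely which rule can conclude a derivation of $\Xi \vdash \param{M}{M'}{M_R}$. Hence two derivations of $\Xi \vdash \param{M}{M'}{M_R}$ and $\Xi \vdash \param{M}{N'}{N_R}$ necessarily end with the same rule, and I may perform a simultaneous inversion on both.

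First I would dispatch the two leaf cases. When $M = \Type{i}$, rule \textsc{ParamSort} forces $M' = N' = \Type{i}$ and fixes $M_R = N_R$ to the displayed relation, so the conclusion is immediate. When $M = x$, rule \textsc{ParamVar} tells us that both $(x, M', M_R)$ and $(x, N', N_R)$ belong to $\Xi$; since $\Xi$ is well-formed, the sequence $\Var(\Xi)$ is duplicate-free, so $x$ occurs in exactly one triple of $\Xi$ and therefore $(M', M_R) = (N', N_R)$. This is the step where the well-formedness hypothesis is essential: without it a variable could be related to two distinct pairs and functionality would fail.

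For the inductive cases I would proceed by inversion followed by the induction hypothesis. In the application case $M = M_1\,M_2$, rule \textsc{ParamApp} exhibits sub-derivations for $M_1$ and $M_2$ in the \emph{same} context $\Xi$; the induction hypothesis applied to $M_1$ and to $M_2$ yields equal translations of each, whence $M' = N'$ and $M_R = N_R$ by the shape prescribed in the rule. The binder cases $M = \lambda x : A.\,M_0$ and $M = \Pi x : A.\,B$ are handled analogously, using that the primed variables $x'$ and the witness variables $x_R$ are fixed functions of $x$ under the naming convention of Section~\ref{ssec:univparam} (and likewise that $A'$ is the primed form of $A$ in \textsc{ParamLam}), so that these names agree across the two derivations. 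For $\Pi$-types one additionally invokes the induction hypothesis on the domain $A$ in context $\Xi$ to pin down $A'$ and $A_R$.

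The one point demanding care is the application of the induction hypothesis \emph{under a binder}: in both \textsc{ParamLam} and \textsc{ParamPi} the context is extended to $\Xi, \param{x}{x'}{x_R}$, and the induction hypothesis applies only if this extended context is again well-formed. I would discharge this by noting that $x'$ and $x_R$ are fresh (guaranteed explicitly by the side condition $x, x' \notin \Var(\Xi)$ in \textsc{ParamPi}, and by the freshness convention for \textsc{ParamLam}), so that $\Var(\Xi, \param{x}{x'}{x_R})$ remains duplicate-free. Granting this, the whole argument is a routine syntax-directed induction; the only genuinely load-bearing ingredients are the duplicate-freeness of $\Var(\Xi)$ in the variable case and the determinism of the name-generation and priming conventions in the binder cases.
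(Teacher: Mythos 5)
Your proof is correct and takes essentially the same approach as the paper, whose entire proof reads ``Immediate by induction on the syntax of $M$'' --- exactly your syntax-directed structural induction, generalized over all well-formed contexts $\Xi$. Your elaboration of the load-bearing details (duplicate-freeness of $\Var(\Xi)$ in the variable case, freshness and the priming convention in the binder cases) simply makes explicit what the paper leaves to the reader.
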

\begin{proof} Immediate by induction on the syntax of $M$.\end{proof}

This presentation of parametricity thus provides an alternative definition of
translation $\paramt{\cdot}$ from Figure~\ref{fig:rawparam}, and accounts for the prime-based
notational convention used in the latter.

\begin{definition} A parametricity context $\Xi$ is \emph{admissible} for a
  well-formed typing context $\Gamma$, denoted $\adm{\Gamma}{\Xi}$, when $\Xi$
  and $\Gamma$ are well-formed as a parametricity context and $\Gamma$ provides
  coherent type annotations for all terms in $\Xi$, that is, for any variables
  $x, x', x_R$ such that $ \Xi(x) = (x', x_R)$, and for any terms $A'$ and
  $A_R$:
  \[ \Xi \vdash \param{\Gamma(x)}{A'}{A_R}\quad \implies\quad \Gamma(x') = A'
  \; \wedge \;
  \Gamma(x_R) \equiv A_R\ x\ x'\]
\end{definition}

We can now state and prove an abstraction theorem:
\begin{theorem}[Abstraction theorem]
\[
\inferrule{
  \Gamma \rhd \Xi \\
  \Gamma \vdash M : A \\
  \Xi \vdash \param{M}{M'}{M_R} \\
  \Xi \vdash  \param{A}{A'}{A_R}
  }{
  \Gamma \vdash M' : A' \\ \text{and} \\ \Gamma \vdash M_R : A_R\ M\ M'
  }
\]
\end{theorem}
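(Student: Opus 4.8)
The plan is to proceed by induction on the derivation of $\Xi \vdash \param{M}{M'}{M_R}$, which by Lemma~\ref{lem:param-rel-func} is syntax-directed on $M$ and thus amounts to a structural induction following the five rules of Figure~\ref{fig:rel-param}. In each case I invert (up to definitional equality) the typing hypothesis $\Gamma \vdash M : A$ to expose the typings of the immediate subterms, feed these to the induction hypothesis, and reassemble the two conclusions $\Gamma \vdash M' : A'$ and $\Gamma \vdash M_R : A_R\ M\ M'$. Several auxiliary facts are used throughout. First, the translation is \emph{total} on terms all of whose free variables are related in $\Xi$, which holds here since $\adm{\Gamma}{\Xi}$ forces $\Xi$ to relate every variable typed by $\Gamma$; this lets me produce, for each subterm, a translation of its type with which to instantiate the induction hypothesis. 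Second, a \emph{substitution lemma}: if $\Xi, \param{x}{x'}{x_R} \vdash \param{B}{B'}{B_R}$ and $\Xi \vdash \param{N}{N'}{N_R}$, then $\Xi \vdash \param{B[N/x]}{B'[N'/x']}{B_R[N/x, N'/x', N_R/x_R]}$, by routine induction on the first derivation. Third, compatibility of the translation with $\conv$, which silently absorbs the conversion rule of \CComega{} (it is not a separate case since I recurse on the syntax of $M$). Finally, Lemma~\ref{lem:param-rel-func} makes the translation $(A', A_R)$ of a type unique whenever it exists, so I may freely rebuild it from the translations of the subcomponents.

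The two leaves are immediate. For \textsc{ParamSort}, $M \conv M' \conv \Type{i}$ and $A \conv \Type{i+1}$, whose own translation yields $A' \conv \Type{i+1}$ and $A_R \conv \lambda X\,Y.\,X \to Y \to \Type{i+1}$; the two goals then reduce to the typing recorded in Equation~\ref{eq:rawparam}, namely that $M_R = \lambda(X\,Y:\Type{i}).\,X \to Y \to \Type{i}$ inhabits $\Type{i} \to \Type{i} \to \Type{i+1}$. For \textsc{ParamVar}, $M = x$, $M' = x'$, $M_R = x_R$ with $\Xi(x) = (x', x_R)$ and $A = \Gamma(x)$; the admissibility hypothesis $\adm{\Gamma}{\Xi}$, instantiated at the given judgment $\Xi \vdash \param{\Gamma(x)}{A'}{A_R}$, yields exactly $\Gamma(x') = A'$ and $\Gamma(x_R) \conv A_R\ x\ x'$, which are the two conclusions.

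The binder cases \textsc{ParamLam} and \textsc{ParamPi} require extending both contexts by $x, x', x_R$ and checking that admissibility is preserved: writing $\Gamma' = \Gamma, x : A_0, x' : A_0', x_R : A_{0R}\ x\ x'$ with $\Xi \vdash \param{A_0}{A_0'}{A_{0R}}$ the domain translation, one has $\adm{\Gamma'}{(\Xi, \param{x}{x'}{x_R})}$ by construction of $\Gamma'$ together with admissibility of the old entries. Applying the induction hypothesis in this extended setting and re-abstracting reconstitutes the $\lambda$- (resp.\ $\Pi$-) translation. Here I additionally use that the primed translation $M'$ mentions only primed variables, so that $M_0'$ and $B_0'$ are typeable already in $\Gamma, x' : A_0'$; this strengthening makes the re-abstraction well typed, and is read off the rules of Figure~\ref{fig:rel-param} exactly as in the raw abstraction Theorem~\ref{thm:abs-raw}.

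The genuine difficulty is the application case \textsc{ParamApp}, where $M = P\,Q$, $M' = P'\,Q'$ and $M_R = P_R\,Q\,Q'\,Q_R$. Inverting the typing gives $\Gamma \vdash P : \Pi x : C.\,D$, $\Gamma \vdash Q : C$ and $A \conv D[Q/x]$. The induction hypothesis applied to $Q$ (at the unique translation of $C$) and to $P$ (at the $\Pi$-translation produced by \textsc{ParamPi} from the translations of $C$ and $D$) yields $\Gamma \vdash P' : \Pi x' : C'.\,D'$, $\Gamma \vdash Q' : C'$ and the corresponding typings of $P_R$ and $Q_R$. Typing $M'$ and $M_R$ by successive applications is then routine, but the delicate point is to match the resulting types with the data $A', A_R$ prescribed by the given judgment $\Xi \vdash \param{A}{A'}{A_R}$. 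This is exactly where the substitution lemma is essential: it identifies the translation of $A \conv D[Q/x]$ with $D'[Q'/x']$ and $D_R[Q/x, Q'/x', Q_R/x_R]$, and Lemma~\ref{lem:param-rel-func} then forces $A' \conv D'[Q'/x']$ and $A_R \conv D_R[\dots]$, so that the computed types of $M'$ and $M_R$ coincide up to $\conv$ with $A'$ and $A_R\ M\ M'$. Discharging this conversion bookkeeping, relying on the compatibility of the translation with definitional equality noted above, is the main obstacle; the remaining cases are comparatively mechanical.
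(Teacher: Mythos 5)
Your proposal is correct and takes the same route as the paper: the paper's own proof is just the one-line ``By induction on the derivation of $\Xi \vdash \param{M}{M'}{M_R}$'', which is exactly your induction. The auxiliary facts you make explicit — totality of the translation under $\adm{\Gamma}{\Xi}$, the substitution lemma for the application case, compatibility with $\conv$, strengthening of the primed translation in the binder cases, and uniqueness via Lemma~\ref{lem:param-rel-func} — are precisely the standard ingredients the paper leaves implicit, and you deploy them where they are genuinely needed.
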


\begin{proof}
By induction on the derivation of $\Xi \vdash \param{M}{M'}{M_R}$.
\end{proof}

\subsection{Univalent parametricity sequents}\label{sec:univ-param-triples}

We now propose in Figure~\ref{fig:univparam} a rephrased version of the univalent parametricity
translation~\cite{univparam2}, using the same sequent style and replacing the translation
of universes with the equivalent relation $\ParamRec{\top}$. Parametricity judgments are denoted:

\[
\Xi  \vdash_{u}   \param{M}{M'}{M_R}
\]
where $\Xi$ is a parametricity context and
$M$, $M'$, and $M_R$ are terms of \CComega{}. The $u$ index is a reminder that
typing judgments  $\Gamma \vdash_u M : A$ involved in the associated
abstraction theorem assume the univalence axiom.

\begin{figure}[h]
\begin{mathpar}
\ninferrule{\ }{\Xi \vdash_{u}  \param{\Box_i}{\Box_{i}}{
  \pproof_{\Box_i}^{\top,\top}}}{UParamSort}

\ninferrule
  {(x, x', x_R) \in \Xi \\ \Xi \vdash}
  {\Xi \vdash_{u} \param{x}{x'}{x_R}}
  {UParamVar}

\ninferrule
  {\Xi \vdash_{u} \param{M}{M'}{M_R} \\
    \Xi \vdash_{u} \param{N}{N'}{N_R}}
  {\Xi \vdash_{u} \param{M\ N}{M'\ N'}{M_R\ N\ N'\ N_R}}
  {UParamApp}

\ninferrule
  {\Xi \vdash_{u} \param{A}{A'}{A_R} \\
    \Xi, \param{x}{x'}{x_R} \vdash_{u} \param{M}{M'}{M_R}}
  {\Xi \vdash_{u} \param{\lambda x : A.\,M}{\lambda x' : A'.\,M'}{\lambda x\,x'\,x_R.\,M_R}}
  {UParamLam}

\ninferrule
  { \Xi \vdash_{u}  \param{A}{A'}{A_R} \\
    \Xi, \param{x}{x'}{x_R} \vdash_{u}  \param{B}{B'}{B_R}}
  {\Xi \vdash_{u}  \param{\Pi x: A.\,B}{\Pi x' : A'.\,B'}
  {\pproof_\Pi^\top\ A_R\ B_R}}
  {UParamPi}

\end{mathpar}
\caption{{\sc UParam}: univalent parametricity rules}
\label{fig:univparam}
\end{figure}

We can now rephrase the abstraction theorem for univalent parametricity.

\begin{theorem}[Univalent abstraction theorem]\label{thm:sequabs}
\[
\inferrule
{
\Gamma \rhd \Xi \\
\Gamma \vdash M : A \\
\Xi \vdash_{u} \param{M}{M'}{M_R} \\
\Xi \vdash_{u} \param{A}{A'}{A_R}
}
{\Gamma \vdash M' : A' \\ \text{and} \\
 \Xi \vdash_u M_R : \Rel{A_R}\ M\ M'}
\]
\end{theorem}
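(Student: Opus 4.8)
The plan is to proceed by induction on the derivation of $\Xi \vdash_u \param{M}{M'}{M_R}$, reusing the skeleton of the proof of the raw sequent abstraction theorem, since the two systems differ only on the rules for universes and products. In each case I would invert the typing hypothesis $\Gamma \vdash M : A$ to expose the shape of $A$, apply the induction hypotheses to the premises, and then assemble the two conclusions $\Gamma \vdash M' : A'$ and $\Gamma \vdash M_R : \Rel{A_R}\ M\ M'$. Functionality of the translation (Lemma~\ref{lem:param-rel-func}) ensures that the pair $(A', A_R)$ read off from the premise $\Xi \vdash_u \param{A}{A'}{A_R}$ is the unique one compatible with the derivation of $M$, so no coherence gap can open between the typing derivation and the parametricity derivation.

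The variable and purely structural cases go through as in the raw setting. For \textsc{UParamVar}, admissibility $\adm{\Gamma}{\Xi}$ delivers $\Gamma(x') = A'$ and the expected type of $x_R$ directly from the context, giving both conclusions at once. For \textsc{UParamLam} I extend both contexts with the fresh triple $\param{x}{x'}{x_R}$, after checking that this extension preserves admissibility so that the induction hypothesis on the body applies. For \textsc{UParamApp} I invert $\Gamma \vdash M\ N : A$ into $\Gamma \vdash M : \Pi x : C.\,D$ and $\Gamma \vdash N : C$ with $A \conv D[N/x]$, combine the induction hypotheses, and invoke a substitution lemma stating that the term-level translation $\paramtm{\cdot}$ and its relational projection commute with substitution; this is what identifies $\Rel{A_R}$ with the instance of the translation of $D$ at $N, N', N_R$, so that $M_R\ N\ N'\ N_R$ lands in the required type.

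The genuinely new content is confined to the two cases that interpret universes and products through the structures of Section~\ref{ssec:pproof-box}. In \textsc{UParamSort} we have $A \conv \Box_{i+1}$, whose translation-as-term is $A_R = \pproof_{\Box_{i+1}}^{\top,\top}$ with $\Rel{A_R} \conv \ParamRec{\top}$ by Equation~\ref{eq:combi}; the goal $M_R : \Rel{A_R}\ \Box_i\ \Box_i$ is therefore, modulo universe-level bookkeeping, exactly the typing $\vdash_u \pproof_{\Box_i}^{\top,\top} : \ParamRec{\top}\ \Box_i\ \Box_i$ constructed in Section~\ref{ssec:popu}, and this is where the proof's dependence on univalence is localized. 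In \textsc{UParamPi} the induction hypotheses provide $A_R : \ParamRec{\top}\ A\ A'$ and, in the extended context, $B_R : \ParamRec{\top}\ B\ B'$; since $\top$ dominates $\dep_\Pi(\top)$ in the product order on $\ann$, the weakenings distributed alongside $\pproof_\Pi^{\top}$ let its minimal-requirement typing rule fire, yielding $\pproof_\Pi^{\top}\ A_R\ B_R : \ParamRec{\top}\ (\Pi x : A.\,B)\ (\Pi x' : A'.\,B')$, while the accompanying $\conv$-equation for $\Rel{\pproof_\Pi^{\top}\ A_R\ B_R}$ shows its relational projection is precisely the relation on functions demanded by the conclusion.

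I expect the product case to be the main obstacle — not through any deep difficulty, but because it is where every moving part must align: one must check that the uniformly-$\top$ structure emitted by the univalent translation really suffices to feed $\pproof_\Pi^{\top}$, that the required weakenings from $\top$ down to $\dep_\Pi(\top)$ are available, and above all that $\Rel{\pproof_\Pi^{\top}\ A_R\ B_R}$ reduces to the function relation \emph{definitionally} rather than merely up to $\tequiv$, since the substitution step of the application case relies on this convertibility. The substitution lemma itself is the other point deserving care, as it must be stated simultaneously for $\paramtm{\cdot}$ and for $\Rel{\cdot}$ and proved by a parallel induction.
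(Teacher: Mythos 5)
Your proposal takes exactly the approach the paper records for this theorem --- induction on the derivation of $\Xi \vdash_{u} \param{M}{M'}{M_R}$, which is all the paper's own proof states --- and your case analysis (admissibility for variables, a substitution lemma for application, Equation~\ref{eq:combi} localizing univalence in the sort case, and the $\conv$-specification of $\pproof_\Pi^{\top}$ in the product case) fills in the details correctly. One simplification you can make: by the paper's symmetry formula for $\dep_\Pi$ and the table in Figure~\ref{fig:dep-pi}, $\dep_\Pi(\top) = (\top,\top)$, so in the \textsc{UParamPi} case the induction hypotheses already supply exactly the structure that $\pproof_\Pi^{\top}$ demands and the weakenings you invoke are vacuous.
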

\begin{proof}
By induction on the derivation of $\Xi \vdash_{u} \param{M}{M'}{M_R}$.
\end{proof}
\begin{remark}
In Theorem~\ref{thm:sequabs}, $\Rel{A_R}$ is a term of type $A \to A' \to \Box$.
Indeed:
\[
\inferrule
{\Gamma \vdash A : \Box_i \\ \Xi \vdash_{u} \param{A}{A'}{A_R} \\
\Gamma \rhd \Xi}{
 \Gamma \vdash_{u} A_R : \Rel{\pproof_{\Box_i}^{\top,\top}}\ A\ A'}
\]
entails $A_R$ has type
\begin{align*}
  \Rel{\pproof_{\Box_i}^{\top,\top}}\ A\ A' \;\equiv\; &
    \ParamRec{\top}\ A\ A' \\
  \;\equiv\;  & \Sigma R : A \to A' \to \Box.\,\isumap{R} \times \isumap{R^{-1}}.
\end{align*}
\end{remark}



\subsection{Annotated type theory}
\label{sec:CCann}

We are now ready to generalize the relational interpretation of types provided by the
univalent parametricity translation, so as to allow for interpreting sorts with instances
of weaker structures than equivalence.
For this purpose, we introduce a variant {\CCann} of {\CComega} where each universe is
annotated with a label indicating the structure available on its relational interpretation.
Recall from Section~\ref{ssec:pproof-box} that we
have used annotations $\alpha \in \ann$ to identify the different structures of the lattice
disassembling type equivalence: these are the labels annotating sorts of  {\CCann}, so that
if $A$ has type $\Box^\alpha$, then the associated
relation $A_R$ has type $\ParamRec{\alpha}\ A\ A'$. The syntax of {\CCann} is thus:

\begin{gather*}
M, N, A, B \in \term{\CCann} ::= \Box_i^\alpha\ |\  x\ |\ M\ N \ |\ \lambda x : A.\,M\ |\ \Pi x : A.\,B\\
\alpha \in \ann = \{0, 1, 2_{\txta}, 2_{\txtb}, 3, 4\}^2 \qquad i \in \mathbb{N}
\end{gather*}

Before completing the actual formal definition of the \Trocq{} proof transfer framework, let us informally illustrate
how these annotations shall drive the interpretation of terms, and in particular, of a
dependent product $\Pi x : A.\,B$. In this case, before translating $B$,
three terms representing the bound variable $x$, its translation $x'$, and the parametricity
 witness $x_R$ are added to the context. The type of $x_R$ is $\Rel{A_R}\ x\ x'$ where $A_R$ is the parametricity witness relating $A$ to its translation $A'$.
The role of annotation $\alpha$ on the sort typing type $A$ is thus to
to govern the amount of information available in witness $x_R$, by
determining the type of $A_R$. This intent is reflected
in the typing rules of \CCann, which rely on the definition of the loci
$\dep_{\Box}$, $\dep_{\to}$ and $\dep_{\Pi}$, introduced in \textsection\ref{ssec:popu}.


Contexts are defined as usual, but typing terms in {\CCann} requires defining a \textit{subtyping} relation
$\subtype$, defined by the rules of Figure~\ref{fig:subtyping}. The typing rules of {\CCann} are
available in Figure~\ref{fig:typing-ann} and follow standard presentations~\cite{DBLP:journals/tcs/AspinallC01}. The $\equiv$ relation in the
(\textsc{SubConv}) rule is the \emph{conversion} relation, defined as the closure
of $\alpha$-equivalence and $\beta$-reduction.
The two types of judgment in \CCann{} are thus:
\[
  \Gamma \vdash_{+} A \subtype B \quad \text{and} \quad \Gamma \vdash_{+} M : A
\]
where $M, A$ and $B$ are terms in $\CCann$ and $\Gamma$ is a context in $\CCann$.

\begin{figure}[h]
\begin{mathpar}
\ninferrule
  {\Gamma \vdash_{+} A : K \\ \Gamma \vdash_{+} B : K \\ A \equiv B}
  {\Gamma \vdash_{+} A \subtype B}
  {SubConv}

\ninferrule
  {\alpha \geq \beta \\ i \leq j}
  {\Gamma \vdash_{+} \Box^{\alpha}_i \subtype \Box^{\beta}_{j}}
  {SubSort}

\ninferrule
  {\Gamma \vdash_{+} M'\ N : K \\ \Gamma \vdash_{+} M \subtype M'}
  {\Gamma \vdash_{+} M\ N \subtype M'\ N}
  {SubApp}

\ninferrule
  {\Gamma, x : A \vdash_{+} M \subtype M'}
  {\Gamma \vdash_{+} \lambda x : A.\,M \subtype \lambda x : A.\,M'}
  {SubLam}

\ninferrule
  {\Gamma \vdash_{+} \Pi x : A.\,B : \Box_i \\ \Gamma \vdash_{+} A' \subtype A \\ \Gamma, x : A' \vdash_{+} B \subtype B'}
  {\Gamma \vdash_{+} \Pi x : A.\,B \subtype \Pi x : A'.\,B'}
  {SubPi}

K ::= \Box_i\ |\ \Pi x : A.\,K
\end{mathpar}
\caption{Subtyping rules for \CCann}\label{fig:subtyping}
\end{figure}

\begin{figure}
\begin{mathpar}

\ninferrule
  {\Gamma \vdash_{+} M : A \\ \Gamma \vdash_{+} A \subtype B}
  {\Gamma \vdash_{+} M : B}
  {Conv$^+$}

\ninferrule
  { (\alpha, \beta) \in \dep_\Box }
  {\Gamma \vdash_{+} \Box_i^\alpha : \Box_{i+1}^{\beta}}
  {Sort$^+$}

\ninferrule
  {(x, A) \in \Gamma \\ \Gamma \vdash_{+}}
  {\Gamma \vdash_{+} x : A}
  {Var$^+$}

\ninferrule
    {\Gamma \vdash_{+} A : \Box_i \\ x \notin \Var(\Gamma)}
    {\Gamma,\ x : A \vdash_{+}}
    {Context$^+$}

\ninferrule
  {\Gamma \vdash_{+} M : \Pi x : A.\,B \\ \Gamma \vdash_{+} N : A}
  {\Gamma \vdash_{+} M\ N : B[x := N]}
  {App$^+$}

\ninferrule
  {\Gamma, x : A \vdash_{+} M : B}
  {\Gamma \vdash_{+} \lambda x : A.\,M : \Pi x : A.\,B}
  {Lam$^+$}

\ninferrule
  {\Gamma \vdash_{+} A : \Box_i^\alpha \\ \Gamma \vdash_{+} B : \Box_i^\beta
    \\ \dep_\to(\gamma) = (\alpha, \beta)}
  {\Gamma \vdash_{+} A \to B : \Box_i^\gamma}
  {Arrow$^+$}

\ninferrule
  {\Gamma \vdash_{+} A : \Box_i^\alpha \\ \Gamma, x : A \vdash_{+} B : \Box_i^\beta
    \\ \dep_\Pi(\gamma) = (\alpha, \beta)}
  {\Gamma \vdash_{+} \Pi x : A.\,B : \Box_i^\gamma}
  {Pi$^+$}

\end{mathpar}
\caption{Typing rules for \CCann}\label{fig:typing-ann}
\end{figure}

Due to space constraints, we omit the direct proof that \CCann is a conservative extension over \CComega. It goes by defining an erasure
function for terms $\eraseAnn{\,\cdot\,} : \term{\CCann} \to \term{\CComega}$
and the associated erasure function for contexts.

\subsection{The {\Trocq} calculus}\label{sec:trocq}
The final stage of the announced generalization consists in building an analogue
to the parametricity translations available in pure type systems, but for the
annotated type theory of \textsection~\ref{sec:CCann}. This analogue is geared
towards proof transfer, as discussed in \textsection~\ref{ssec:bigp}, and
therefore designed to \emph{synthesize} the output of the translation
from its input, rather than to \emph{check} that certain pairs of terms
are in relation. However, splitting up the interpretation of universes into a
lattice of possible relation structures means that the source term of the
translation is not enough to characterize the desired output: the translation
needs to be informed with some extra information about the expected outcome of
the translation. In the \Trocq{} calculus, this extra information is a
type of $\CCann$.

We thus define {\Trocq} \emph{contexts} as lists of quadruples:
\[
\Delta ::= \varepsilon\ |\ \Delta,\; \paramq{x}{A}{x'}{x_R}
\quad \text{where}\; A \in \mathcal{T}_{\CCann},
\]
and introduce a conversion function $\gamma$ from {\sc Trocq} contexts to \CCann
contexts:
\begin{align*}
\gamma(\varepsilon) \quad = \quad & \varepsilon \\
\gamma(\Delta, \; \paramq{x}{A}{x'}{x_R}) \quad = \quad & \gamma(\Delta),\; x : A
\end{align*}

Now, a {\Trocq} judgment is a 4-ary relation of the form $\Delta \vdash_t
\paramq{M}{A}{M'}{M_R}$, which is read \emph{in context $\Delta$, term~$M$ of
annotated type $A$ translates to term~$M'$, because $M_R$} and $M_R$ is called a
parametricity witness. {\Trocq} judgments
are defined by the rules of Figure~\ref{fig:trocq}. This definition involves a
weakening function for parametricity witnesses, defined as follows.


\begin{definition}
For all $p, q  \in \{0, 1, 2_{\txta}, 2_{\txtb}, 3, 4\}$, such that $p \geq q$, we define
 map $\downarrow^p_q : \ParamM{p} \to \ParamM{q}$, which forgets the fields from class $\ParamM{p}$ that are not in $\ParamM{q}$.

For all $\alpha, \beta \in \ann$, such that $\alpha \geq \beta$,  function $\downdownarrows^\alpha_\beta : \ParamRec{\alpha}\ A\ B \to \ParamRec{\beta}\ A\ B$ is defined by:
\[
  \downdownarrows^{(m,n)}_{(p,q)}\ \langle R, M^\rightarrow, M^\leftarrow \rangle := \langle R, \downarrow^m_p M^\rightarrow, \downarrow^n_q M^\leftarrow \rangle.
\]

The weakening function on parametricity witnesses is defined on Figure~\ref{fig:weak} by extending
function $\downdownarrows^\alpha_\beta$ to all relevant pairs of types of \CCann, \ie  $\Downarrow^T_U$ is defined
for $T, U \in \mathcal{T}_{\CCann}$ as soon as $T \subtype U$.
\end{definition}

\begin{figure}[h]
  \begin{mathpar}
  \ninferrule
    {(\alpha, \beta) \in \dep_\Box}
    {\Delta \vdash_t \paramq{\Box_i^{\alpha}}{\Box_{i+1}^{\beta}}{\Box_i^{\alpha}}{\pproof_{\Box_i}^{\alpha, \beta}}}
    {TrocqSort}
  
  \ninferrule
    { (x, A, x', x_R) \in \Delta \\ \gamma(\Delta) \vdash_{+}}
    {\Delta \vdash_t \paramq{x}{A}{x'}{x_R}}
    {TrocqVar}
  
    \ninferrule
    {\Delta \vdash_t \paramq{M}{\Pi x : A.\,B}{M'}{M_R} \\
      \Delta \vdash_t \paramq{N}{A}{N'}{N_R}}
    {\Delta \vdash_t \paramq{M\ N}{B[x := N]}{M'\ N'}{M_R\ N\ N'\ N_R}}
    {TrocqApp}
  
  \ninferrule
    {\Delta \vdash_t \paramq{A}{\Box_i^\alpha}{A'}{A_R} \\\\
      \Delta, \paramq{x}{A}{x'}{x_R} \vdash_t \paramq{M}{B}{M'}{M_R}}
    {\Delta \vdash_t \paramq{\lambda x : A.\,M}{\Pi x : A.\,B}{\lambda x' : A'.\,M'}
    {\lambda x\,x'\,x_R.\,M_R}}
    {TrocqLam}
  
  \ninferrule{
    (\alpha, \beta) = \dep_\rightarrow(\delta)\\\\
       \Delta \vdash_t \paramq{A}{\Box_i^{\alpha}}{A'}{A_R} \\
      \Delta \vdash_t \paramq{B}{\Box_i^{\beta}}{B'}{B_R}\\
       }
    {\Delta \vdash_t \paramq{A \rightarrow B}{\Box_i^\delta}{A' \rightarrow B'}
    {\pproof_\rightarrow^{\delta}\ A_R\ B_R}}
    {TrocqArrow}
  
  \ninferrule
    {
      (\alpha, \beta) = \dep_\Pi(\delta) \\
      \Delta \vdash_t \paramq{A}{\Box_i^{\alpha}}{A'}{A_R} \\\\
      \Delta, \paramq{x}{A}{x'}{x_R} \vdash_t \paramq{B}{\Box_i^{\beta}}{B'}{B_R}
     }
    {\Delta \vdash_t \paramq{\Pi x : A.\,B}{\Box_i^\delta}{\Pi x' : A'.\,B'}
    {\pproof_\Pi^{\delta}\ A_R\ B_R}}
    {TrocqPi}
  
  \ninferrule
    {\Delta \vdash_t \paramq{M}{A}{M'}{M_R} \\ \gamma(\Delta) \vdash_{+} A \subtype B}
    {\Delta \vdash_t \paramq{M}{B}{M'}{\Downarrow^A_{B} M_R}}
    {TrocqConv}
  \end{mathpar}
  \caption{{\sc Trocq} rules}\label{fig:trocq}
  \end{figure}

\begin{figure}
\begin{mathpar}
\big\Downarrow^{\Box_i^\alpha}_{\Box_i^{\alpha'}}\ t_R :=\
  \downdownarrows^\alpha_{\alpha'} t_R

\big\Downarrow^{A\ M}_{A'\ M'}\ N_R := \big\Downarrow^A_{A'}\ M\ M'\ N_R

\big\Downarrow^{\lambda x : A.\,B}_{\lambda x : A'.\,B'}\ M\ M'\ N_R :=
  \big\Downarrow^{B[x := M]}_{B'[x := M']}\ N_R

\big\Downarrow^{\Pi x : A.\,B}_{\Pi x : A'.\,B'}\ M_R :=
  \lambda x\ x'\ x_R.\ \big\Downarrow^B_{B'}\
  \big(M_R\ x\ x'\ (\big\Downarrow^{A'}_A\ x_R)\big)

\big\Downarrow^A_{A'}\ M_R := M_R
\end{mathpar}
\captionof{figure}{Weakening of parametricity witnesses}
\label{fig:weak}
\end{figure}

 An abstraction theorem relates \Trocq{} judgments and typing in \CCann.

\begin{theorem} [{\sc Trocq} abstraction theorem]\label{thm:trocq-abstraction}
\[
\inferrule
{ \gamma(\Delta) \vdash_{+} \\
  \gamma(\Delta) \vdash_{+} M : A\\
\Delta \vdash_t \paramq{M}{A}{M'}{M_R} \\
\Delta \vdash_t \paramq{A}{\Box_i^\alpha}{A'}{A_R}}
{\gamma(\Delta) \vdash_{+} M' : A' \\ \text{and} \\
 \gamma(\Delta) \vdash_{+} M_R : \Rel{A_R}\ M\ M'}
\]
\end{theorem}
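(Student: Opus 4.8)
The plan is to proceed by induction on the derivation of the \Trocq{} judgment $\Delta \vdash_t \paramq{M}{A}{M'}{M_R}$, following the same scheme as the abstraction theorems for the raw (Theorem~\ref{thm:abs-raw}) and univalent (Theorem~\ref{thm:sequabs}) translations. The auxiliary premise $\Delta \vdash_t \paramq{A}{\Box_i^\alpha}{A'}{A_R}$ supplies the translated type $A'$ and its witness $A_R$; as for the univalent translation, a functionality property in the spirit of Lemma~\ref{lem:param-rel-func} lets us assume this derivation is aligned with the one on $M$, so that the two goals $\gamma(\Delta) \vdash_{+} M' : A'$ and $\gamma(\Delta) \vdash_{+} M_R : \Rel{A_R}\ M\ M'$ can be established simultaneously. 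Each case then combines the inductive hypotheses on the sub-derivations with the matching typing rule of \CCann{} (Figure~\ref{fig:typing-ann}) and the correctness of the relevant combinator from \textsection~\ref{ssec:popu}.

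The sort, arrow and product cases are where the construction of \textsection~\ref{ssec:popu} pays off. For \textsc{TrocqSort}, the typing $\gamma(\Delta) \vdash_{+} \Box_i^\alpha : \Box_{i+1}^\beta$ follows from rule \textsc{Sort}$^+$ since $(\alpha, \beta) \in \dep_\Box$, while the witness obligation $\gamma(\Delta) \vdash_{+} \pproof_{\Box_i}^{\alpha,\beta} : \ParamRec{\beta}\ \Box_i^\alpha\ \Box_i^\alpha$ is precisely the defining property of $\pproof_{\Box_i}^{\alpha,\beta}$ recorded in Equation~\ref{eq:combi}. For \textsc{TrocqArrow} and \textsc{TrocqPi}, rules \textsc{Arrow}$^+$ and \textsc{Pi}$^+$ supply the typing of $A' \to B'$ and $\Pi x' : A'.\,B'$ — their side conditions $\dep_\to(\delta) = (\alpha, \beta)$ and $\dep_\Pi(\delta) = (\alpha, \beta)$ matching those of the \Trocq{} rules — and the witness obligations are discharged by the typing rules established for $\pproof_\to^\delta$ and $\pproof_\Pi^\delta$ in \textsection~\ref{ssec:popu}, fed with the inductive hypotheses $A_R : \ParamRec{\alpha}\ A\ A'$ and $B_R : \ParamRec{\beta}\ B\ B'$. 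The \textsc{TrocqApp} case additionally needs a substitution lemma, stating that the translation commutes with substitution, so that the type $B[x := N]$ translates to $B'[x := N']$ with a witness obtained by plugging $N, N', N_R$ into the witness of $B$; this is routine but must be proved beforehand. The \textsc{TrocqVar} and \textsc{TrocqLam} cases reduce, as usual, to well-formedness of $\Delta$ and to rule \textsc{Lam}$^+$.

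I expect the main obstacle to be the \textsc{TrocqConv} case, which has no analogue in the raw or univalent settings and is the price paid for the subtyping machinery of \CCann{}. Here the annotated type of $M$ changes from $A$ to $B$ along a judgment $\gamma(\Delta) \vdash_{+} A \subtype B$, and the witness is reshaped by the weakening operator $\Downarrow^A_B$ of Figure~\ref{fig:weak}. Discharging this case rests on an auxiliary lemma, proved by induction on the derivation of $A \subtype B$ and mirroring the recursive clauses defining $\Downarrow$, which establishes at once that the translation preserves subtyping, $\gamma(\Delta) \vdash_{+} A' \subtype B'$ — so that \textsc{Conv}$^+$ upgrades the inductive hypothesis $\gamma(\Delta) \vdash_{+} M' : A'$ into $\gamma(\Delta) \vdash_{+} M' : B'$ — and that $\Downarrow^A_B$ is well typed, mapping a witness of type $\Rel{A_R}\ M\ M'$ to one of type $\Rel{B_R}\ M\ M'$. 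The base case, at sorts, is where $\downdownarrows^\alpha_\beta : \ParamRec{\alpha}\ A\ B \to \ParamRec{\beta}\ A\ B$ intervenes: it is well typed by construction of the field-forgetting maps $\downarrow^p_q$, and it leaves the relation field untouched, so that $\Rel{A_R}$ and $\Rel{B_R}$ coincide up to conversion. The $\Pi$-clause is the delicate one, as it transports the argument witness contravariantly through $\Downarrow^{A'}_A$ on the domain before applying $\Downarrow^B_{B'}$ on the codomain, so the induction must thread both directions of the subtyping simultaneously. Once this weakening lemma is in place, \textsc{TrocqConv} closes and the overall induction is complete.
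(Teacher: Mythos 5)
Your proposal follows exactly the paper's proof strategy: the paper discharges this theorem in one line, "by induction on the derivation $\Delta \vdash_t \paramq{M}{A}{M'}{M_R}$", which is precisely your overall scheme. The case analysis you supply — \textsc{TrocqSort} via \textsc{Sort}$^+$ and Equation~\ref{eq:combi}, the combinator typings for \textsc{TrocqArrow}/\textsc{TrocqPi}, a substitution lemma for \textsc{TrocqApp}, and the auxiliary lemma that $\Downarrow^A_B$ preserves $\Rel{\cdot}$-typing along $\gamma(\Delta) \vdash_{+} A \subtype B$ for \textsc{TrocqConv} — is a faithful and correct elaboration of the details the paper leaves implicit.
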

\begin{proof}
  By induction on derivation $\Delta \vdash_t \paramq{M}{A}{M'}{M_R}$.
\end{proof}
Note that type $A$ in the typing hypothesis $\gamma(\Delta) \vdash_{+} M : A$ of the
abstraction theorem is exactly the extra information passed to the
translation. The latter can thus also be seen as an inference algorithm,
which infers annotations for the output of the translation from that of
the input.

\begin{remark}\label{rmk:infer}
Since by definition of $\pproof_\Box^{\alpha,\beta}$ (Equation~\ref{eq:combi}), we have
  $\vdash_t \paramq{\Box^\alpha}{\Box^\beta}{\Box^\alpha}{\pproof_\Box^{\alpha,\beta}}$,
  by applying Theorem~\ref{thm:trocq-abstraction} with $\gamma(\Delta) \vdash_{+} A : \Box^\alpha$, we get:
\[
\inferrule
{\gamma(\Delta) \vdash_{+} A: \Box^\alpha \\
\Delta \vdash_t \paramq{A}{\Box^\alpha}{A'}{A_R} \\
}{
 \gamma(\Delta) \vdash_{+} A_R : \Rel{\pproof_\Box^{\alpha,\beta}}\ A\ A'
}.
\]
Now by the same definition, for any $\beta\in\ann$, $\Rel{\pproof_\Box^{\alpha,\beta}} =
\ParamRec{\alpha}$, hence $\gamma(\Delta) \vdash A_R : \ParamRec{\alpha}\ A\
A'$, as expected by the type annotation $A : \Box^\alpha$ in the premise of the rule.

\end{remark}

\begin{remark}\label{rmk:infer-univ}
By applying the Remark~\ref{rmk:infer} with $\vdash_{+} \Box^\alpha :
\Box^\beta$, we indeed obtain that~\hbox{$\vdash_{+} \pproof_\Box^{\alpha,\beta} : \ParamRec{\beta}\
\Box^\alpha\  \Box^\alpha$} as expected, provided that $(\alpha, \beta) \in
\dep_\Box$.
\end{remark}

\subsection{Constants}
\label{sec:constants}

Concrete applications require extending \Trocq{} with constants.
Constants are similar to variables,
except that they are stored in a global
context instead of a typing context. A crucial difference though is that a constant may be assigned several different annotated
types in \CCann.

Consider for example, a constant $\mono{list}$, standing for the type of polymorphic lists.
As $\mono{list}\ A$ is the type of lists with elements of type $A$, it
can be annotated with type $\Box^\alpha \to \Box^\alpha$ for
any $\alpha \in \ann$.

Every constant declared in the global environment has an associated collection of possible
annotated types $ \constantTypes{c} \subset \mathcal{T}_{\CCann}$. We require that
all the annotated types of a same constant share the same erasure
in \CComega, \ie $\forall c, \forall A, \forall B,\ A, B \in \constantTypes{c} \Rightarrow \eraseAnn{A} = \eraseAnn{B}$.
For example, $\constantTypes{\mono{list}} =
  \left\lbrace\Box^\alpha \to \Box^\alpha\ \middle|\ \alpha \in \ann \right\rbrace.$

In addition, we provide translations $\dep_c(A)$ for each possible annotated type $A$ of each constant~$c$ in
the global context. For example,
$\dep_{\mono{list}}(\Box^{(1,0)} \to \Box^{(1,0)})$ is equal to
$\left(\mono{list},\ 
\lambda A\,A'\,A_R.\,\left(\mono{List.All2}\ A_R,\; \mono{List.map}\ \left(\mono{map}\ A_R\right)\right)\right)$,
 where relation \texttt{List.All2 $A_R$} relates lists of the same length, whose elements are pair-wise related via $A_R$, $\mono{List.map}$ is the usual map function on lists and $\mono{map}\ A_R : A \to A'$ extracts the \emph{map} projection of the record
$A_R$ of type $\ParamRec{(1,0)}\ A\ A' \equiv \Sigma R.\,A \to A'$. Part of
these translations can be generated automatically by weakening.

We describe in Figure~\ref{fig:constant-rules} the additional rules for
constants in \CCann and {\sc Trocq}. Note that for an input
term featuring constants, an unfortunate choice of annotation may
lead to a stuck translation.


\begin{figure}[h]
\begin{mathpar}
\centering
\ninferrule
  {c \in \constants{} \\ A \in \constantTypes{c}}
  {\Gamma \vdash c : A}
  {Const$^+$}

\ninferrule
  {\dep_c(A) = (c', c_R)}
  {\Delta \vdash \paramq{c}{A}{c'}{c_R}}
  {TrocqConst}

\end{mathpar}
\caption{Additional constant rules for \CCann and {\Trocq}}\label{fig:constant-rules}
\end{figure}

We describe in Figure~\ref{fig:constant-rules} the additional rules for
constants in \CCann and {\sc Trocq}. Note that for an input
term featuring constants, an unfortunate choice of annotation may
lead a stuck translation.

\vspace{-2ex}
\section{Implementation and applications}\label{sec:applications}



The \Trocq{} plugin~\cite{trocq_zenodo} turns the material presented in Section~\ref{sec:strat-param-rules} into an actual tactic, called \coq{trocq}, for automating proof transfer in \Coq{}. This tactic synthesizes a new goal from the current one, as well as a proof that the former implies the latter. User-defined relations between constants, registered via specific vernacular commands, inform this synthesis. The core of the plugin implements each rule of the \Trocq{}
 calculus in the \Elpi{} meta-programming language~\cite{DBLP:conf/lpar/DunchevGCT15,tassi:hal-01897468}, on top of \Coq{} libraries formalizing the
 contents of Section~\ref{sec:disassembling-reassembling}.
 In the logic programming paradigm of {\Elpi}, each rule of Figure~\ref{fig:trocq} translates
 gracefully into a corresponding \LambdaProlog{} predicate, making the corresponding source code very close
 to the presentation of \textsection\ref{sec:trocq}.
 However, the \Trocq{} plugin also implements a much less straightforward annotation inference algorithm, so as to hide the management of sort annotations to  {\Coq} users completely. This section illustrates the usage of the \coq{trocq} tactic on various concrete examples.

\subsection{Isomorphisms}\label{ssec:ex-iso}
\paragraph{Bitvectors \trocqcode{examples/Vector_tuple.v\#L284}{(code)}. }
Here are two possible encodings of bitvectors in \Coq{}:
\begin{minted}{coq}
bounded_nat (k : nat) := {n : nat & n < pow 2 k}. (* n < 2^k *)
bitvector (k : nat) := Vector.t Bool k. (* size k vectors of booleans *)
\end{minted}
We can prove that these representations are equivalent by combining two proofs by transitivity: the proof that \coq{bounded_nat k} is related to \coq{bitvector k} for a given \coq{k}, and the proof that \coq{Vector.t} is related to itself. We also make use of the equivalence relations \coq{natR} and \coq{boolR}, which respectively relate type \coq{nat} and \coq{Bool} with themselves:
\begin{minted}{coq}
Rk : /$\forall$/ (k : nat), Param44.Rel (bounded_nat k) (bitvector k)
vecR : /$\forall$/ (A A' : Type) (AR : Param44.Rel A A') (k k' : nat)
  (kR : natR k k'), Param44.Rel (Vector.t A k) (Vector.t A' k')
(* equivalence between types (bounded_nat k) and (bitvector k') *)
bvR : /$\forall$/ (k k' : nat) (kR : natR k k'),
  Param44.Rel (bounded_nat k) (bitvector k')
(* informing Trocq with these equivalences *)
Trocq Use vecR natR boolR bvR.
\end{minted}

Now, suppose we would like to transfer the following result from the bounded natural numbers to the vector-based encoding:
\begin{minted}{coq}
/$\forall$/ (k : nat) (v : bounded_nat k) (i : nat) (b : Bool), i < k ->
  get (set v i b) i = b
\end{minted}
As this goal involves \coq{get} and \coq{set} operations on bitvectors, and the order and equality relations on type \coq{nat}, we inform \Trocq{} with the associated operations \coq{getv} and \coq{setv} on the vector encoding.
E.g., for \coq{get} and \coq{getv}, we prove:
\begin{minted}{coq}
getR : /$\forall$/ (k k' : nat) (kR : natR k k')
  (v : bounded_nat k) (v' : bitvector k') (vR : bvR k k' kR v v')
  (n n' : nat) (nR : natR n n'), boolR (get v n) (getv v' n')
\end{minted}
We can now use proof transfer from bitvectors to bounded natural numbers:
\begin{minted}{coq}
Trocq Use eqR ltR. (* where eq and lt are translated to themselves *)
Trocq Use getR setR.

Lemma setBitGetSame : /$\forall$/ (k : nat) (v : bitvector k),
/$\forall$/ (i : nat) (b : Bool), i < k -> getv (setv v i b) i = b.
Proof. trocq. exact setBitGetSame'. (* same lemma, on bitvector *) Qed.
\end{minted}
\paragraph{Induction principle on integers. \trocqcode{examples/peano_bin_nat.v}{(code)}.}
Recall that the problem from Example~\ref{ex:elim} is to obtain the following elimination scheme, from that available on type~$\mathbb{N}$:

\begin{minted}{coq}
N_ind : /$\forall$/ P : N /$\rightarrow$/ /$\square$/, P O/$_{\texttt{N}}$/ /$\rightarrow$/ (/$\forall$/ n : N, P n /$\rightarrow$/ P (S/$_{\texttt{N}}$/ n)) /$\rightarrow$/ /$\forall$/ n : N, P n
\end{minted}

We first inform \Trocq{} that \coq{N} and \coqescape{/$\mathbb{N}$/} are isomorphic, by providing proofs that the two conversions $\tonat\ :\
\texttt{N}\ \to\ \mathbb{N}$ and $\ofnat\ :\ \mathbb{N}\ \to\ \texttt{N}$
are mutual inverses. Using lemma \coq{Iso.toParam}, we can deduce an equivalence \coq{Param44.Rel N
/$\mathbb{N}$/}, \ie $\ParamRec{(4,4)}\ {\tt N}\ \mathbb{N}$. We also prove and register that zeros and successors are related:
\begin{minted}{coq}
Definition N/$_{\textrm{R}}$/ : Param44.Rel N /$\mathbb{N}$/ := ...
Lemma O/$_{\textrm{R}}$/ : rel N/$_{\textrm{R}}$/ O/$_{\textrm{N}}$/ O/$_{\mathbb{N}}$/.
Lemma S/$_{\textrm{R}}$/ : /$\forall$/ m n, rel N/$_{\textrm{R}}$/ m n /$\to$/ rel N/$_{\textrm{R}}$/ (S/$_{\textrm{N}}$/ m) (S/$_{\mathbb{N}}$/ n).
Trocq Use N/$_{\textrm{R}}$/ O/$_{\textrm{R}}$/ S/$_{\textrm{R}}$/.
\end{minted}
\Trocq{} is now able to generate, prove and apply the desired implication:
\begin{minted}{coq}
Lemma N_ind : /$\forall$/ P : N /$\rightarrow$/ /$\square$/, P O/$_{\texttt{N}}$/ /$\rightarrow$/ (/$\forall$/ n : N, P n /$\rightarrow$/ P (S/$_{\texttt{N}}$/ n)) /$\rightarrow$/
  /$\forall$/ n : N, P n.
Proof.
  trocq. (* in the goal, N, O$_{\texttt{N}}$, S$_{\texttt{N}}$ have been replaced by $\mathbb{N}$, O$_{\mathbb{N}}$, S$_{\mathbb{N}}$ *)
  exact nat_rect.
Qed.
\end{minted}
Inspecting this proof confirms that only information up to level $(2_\txta, 3)$ has been extracted from the equivalence proof \coqescape{N/$_{\textrm{R}}$/}. It is thus possible to run the exact proof transfer, but with a weaker relation, as illustrated in the \trocqcode{examples/nat_ind.v}{code} for an abstract type $I$ with a zero and a successor constants, and a retraction $\mathbb{N} \to I$.

\subsection{Sections, retractions}\label{ssec:ex-sec-ret}

\paragraph{Modular arithmetic \trocqcode{examples/int_to_Zp.v}{(code)}.}

A typical application of modular arithmetic is to show that some statement on $\mathbb{Z}$ can be reduced to statments on $\mathbb{Z}/p\mathbb{Z}$
Let us show how \Trocq{} can synthesize and prove the following implication:
\begin{minted}{coq}
Lemma IntRedModZp : (forall (m n p : Zmodp), (m = n * n)%Zmodp -> m = 0)
  -> forall (m n p : int), (m = n * n)%int -> (m == 0)%int.
Proof. intro Hyp. trocq; simpl. now apply Hyp. Qed.
\end{minted}
where scope \coq{/$\%$/Zmodp} is for the usual product and zero on type \coq{Zmodp}, for $\mathbb{Z}/p\mathbb{Z}$, scope \coq{/$\%$/int} for those on type \coq{int}, for $\mathbb{Z}$, and \coq{==} is an equality test modulo $p$ on type \coq{int}. Observe that the implication deduces a lemma on $\mathbb{Z}$ \emph{from} its modular analogues. Type \coq{Zmodp} and \coq{int} are obviously not equivalent, but a \emph{retraction} is actually enough for this proof transfer. We have:
\begin{minted}{coq}
modp : int -> Zmodp
reprp : Zmodp -> int
reprpK : /$\forall$/ (x : Zmodp), modp (reprp x) = x
Rp : Param42a.Rel int Zmodp
\end{minted}
where \coq{Rp}, (a proof that $\ParamRec{(4, 2_\txta)}\ \ \mathbb{Z}\ \ \mathbb{Z}/p\mathbb{Z}$), is obtained from \coq{reprpK} via lemma \coq{SplitSurj.toParam}. Proving lemma \coq{IntRedModZp} by \coq{trocq} now just requires relating the respective zeroes, multiplications, and equalities of the two types: 
\begin{minted}{coq}
R0 : Rp 0%int 0%Zmodp.
Rmul : /$\forall$/ (m : int) (x : Zmodp) (xR : Rp m x)
  (n : int) (y : Zmodp) (yR : Rp n y), Rp (m * n)%int (x * y)%Zmodp.
Reqmodp : /$\forall$/ (m : int) (x : Zmodp), Rp m x ->
  /$\forall$/ (n : int) (y : Zmodp), Rp n y -> Param01.Rel (m == n) (x = y).
Trocq Use Rp Rmul R0 Reqmodp. (* informing Trocq with these relations *)
\end{minted}
where \coq{Param01.Rel P Q} (\coq{Param01.Rel} is the \Coq{} name for $\ParamRec{(0, 1)}$) is \coq{Q -> P}. Note that by definition of the relation given by \coq{Rp}, lemma \coq{Rmul} amounts to:
\begin{minted}{coq}
  /$\forall$/ (m n : int), modp (m * n)%int = (modp m * modp n)%Zmodp.
\end{minted}


\paragraph{Summable sequences. \trocqcode{examples/summable.v}{(code)}.} Example~\ref{ex:sums} involves two instances of subtypes: type $\xnnR$ extends a type $\nnR$ of positive real numbers with an abstract element and type \coq{summable} is for provably convergent sequences of positive real numbers:
\begin{minted}{coq}
Inductive /$\xnnR$/ : Type := Fin : /$\nnR$/ /$\to$/ /$\xnnR$/ | Inf : /$\xnnR$/. 
Definition seq/$_\nnR$/ := nat /$\to$/ /$\nnR$/. Definition seq/$_\xnnR$/ := nat /$\to$/ /$\xnnR$/.
Record summable := {to_seq :> seq/$_\nnR$/; _ : isSummable to_seq}.
\end{minted}
Type $\xnnR$ and $\nnR$ are related at level $(4,2_\txtb)$: \eg \coq{truncate : /$\xnnR$/ -> /$\nnR$/} provides a partial inverse to the \coq{Fin} injection by sending the extra \coq{Inf} to zero.
Types \coq{summable} and \coq{seq/$_\xnnR$/} are also related at level $(4,2_\txtb)$, via the relation:
\begin{minted}{coq}
Definition Rrseq (u : summable) (v : seq/$_\xnnR$/) : Type := seq_extend u = v.
\end{minted}
where \coq{seq_extend} transforms a summable sequence into a sequence of extended positive reals in the obvious way.
Now \coq{/$\Sigma_\xnnR$/ u : /$\xnnR$/} is the sum of a sequence \coq{u : seq/$_\xnnR$/} of extended positive reals, and we also define the sum of a sequence of positive reals, as a positive real, again by defaulting infinite sums to zero. For the purpose of the example, we only do so for summable sequences:
\begin{minted}{coq}
Definition /$\Sigma_\nnR$/ (u : summable) : /$\nnR$/ := truncate (/$\Sigma_\xnnR$/ (seq_extend u)).
\end{minted}
These two notions of sums are related via \coq{Rrseq}, and so are the respective additions of positive (resp. extended positive) reals and the respective pointwise additions of sequences. Once \Trocq{} is informed of these relations, the tactic is able to transfer the statement from the much easier variant on extended reals:
\begin{minted}{coq}
(* relating type $\nnR$ and $\xnnR$ and their respective equalities *)
Trocq Use Param01_paths Param42b_nnR.
(* relating sequence types, sums, addition, addition of sequences*)
Trocq Use Param4a_rseq R_sum_xnnR R_add_xnnR seq_nnR_add.

Lemma sum_xnnR_add : /$\forall$/ (u v : /$\xnnR$/), /$\Sigma_\xnnR$/ (u + v) = /$\Sigma_\xnnR$/ u + /$\Sigma_\xnnR$/ v.
Proof.(...) Qed. (* easy, as no convergence proof is needed *)

Lemma sum_nnR_add : /$\forall$/ (u v : /$\nnR$/), /$\Sigma_\nnR$/ (u + v) = /$\Sigma_\nnR$/ u + /$\Sigma_\nnR$/ v.
Proof. trocq; exact sum_xnnR_add. Qed.
\end{minted}

\vspace{-2ex}
\subsection{Polymorphic, dependent types}\label{ssec:ex-poly-dep}
\paragraph{Polymorphic parameters \trocqcode{theories/Param_list.v}{(code)}.}

Suppose we want to transfer a goal involving lists along an equivalence between the types of the values contained in the lists.
We first prove that the \coq{list} type former is equivalent to itself, and register this fact:
\begin{minted}{coq}
listR : /$\forall$/ A A' (AR : Param44.Rel A A'), Param44.Rel (list A) (list A')
Trocq Use listR.
\end{minted}
We also need to relate with themselves all operations on type \coq{list} involved in the goal, including constructors, and to register these facts, before \Trocq{} is able to transfer any goal, \eg about \coq{list N} to its analogue on \coq{list /$\mathbb{N}$/}.

Note that lemma \coq{listR} requires an \emph{equivalence} between its parameters. If this does not hold, as in the case of type \coq{int} and \coq{Zmodp} from Section~\ref{ssec:ex-iso}, \trocqcode{examples/stuck.v}{the translation is stuck}: weakening does not apply here. In order to avoid stuck translation, we need several versions of \coq{listR} to cover all cases. For instance, the following lemma 
is required for proof transfers from \coq{list Zmodp} to \coq{list int}.
\begin{minted}{coq}
listR2a4 : /$\forall$/ A A' (AR : Param2a4.Rel A A'),
  Param2a4.Rel (list A) (list A').
\end{minted}
\paragraph{Dependent and polymorphic types \trocqcode{examples/Vector_tuple.v}{(code)}.}

Fixed-size vectors can be represented by iterated tuples, an alternative to the
inductive type \coq{Vector.t}, from \Coq{}'s standard library, as follows.
\begin{minted}{coq}
Definition tuple (A : Type) : nat -> Type := fix F n :=
  match n with O => Unit | S n' => F n' * A end.
\end{minted}
On the following mockup example, \Trocq{} transfers a lemma on \coq{Vector.t} to its analogue on \coq{tuple}, about a function \coq{head : /$\forall$/ A n, tuple A (S n) -> A}, and a function \coq{const : /$\forall$/ A, A -> /$\forall$/ n, tuple A n} creating a constant vector, and simultaneously refines integers into the integers modulo $p$ from Section~\ref{ssec:ex-iso}:

\begin{minted}{coq}
Lemma head_cst (n : nat) (i : int): Vector.hd (Vector.const i (S n)) = i.
Proof. destruct n; simpl; reflexivity. Qed. (* easy proof *)

Lemma head_cst' : /$\forall$/ (n : nat) (z : Zmodp), head (const z (S n)) = z.
Proof. trocq. exact head_const. Qed.
\end{minted}
This automated proof only requires proving (and registering) that \coq{head} and \coq{const} are related to their analogue \coq{Vector.hd} and \coq{Vector.const}, from \Coq{}'s standard library. Note that the proof uses the equivalence between \coq{Vector.t} and \coq{tuple} but only requires a retraction between parameter types.

\vspace{-2ex}
\section{Conclusion}
\label{sec:rel-work-and-concl}

The \Trocq{} framework can be seen as a generalization of the univalent parametricity translation~\cite{univparam2}. It allows for weaker relations than equivalence, thanks to a fine-grained control of the data propagated by the translation. 
This analysis is enabled by skolemizing the usual symmetrical presentation of equivalence, so as to expose the data, and by introducing a hierarchy of algebraic structures for relations. This scrutiny allows in particular to get rid of the univalence axiom for a larger class of equivalence proofs~\cite{univparam}, and to deal with refinement relations for arbitrary terms, unlike the \CoqEAL{} library~\cite{DBLP:conf/cpp/CohenDM13}. Altenkirch and Kaposi already proposed a symmetrical, skolemized phrasing of type equivalence~\cite{DBLP:conf/types/AltenkirchK15}, but for different purposes. In particular, they did not study the resulting hierarchy of structures. Definition~\ref{def:umap} however slightly differs from theirs: by reducing the amount of transport involved, it eases formal proofs significantly in practice, both in the internal library of \Trocq{} and for end-users of the tactic.

The concrete output of this work is a plugin~\cite{trocq_zenodo} that consists of about 3000~l. of original \Coq{} proofs and 1200 l. of meta-programming, in the \Elpi{} meta-language, excluding white lines and comments. This plugin goes beyond the state of the art in two ways. First, it demonstrates that a single implementation of this parametricity framework covers the core features of several existing other tactics, for refinements~\cite{DBLP:conf/cpp/CohenDM13,10.1007/978-3-642-32347-8_7}, generalized rewriting~\cite{DBLP:journals/jfrea/Sozeau09}, and proof transfer~\cite{univparam2}. Second, it addresses use cases, such as Example~\ref{ex:sums}, that are beyond the skills of any existing tool in any proof assistant based on type theory.  The prototype plugin arguably needs an improved user interface so as to reach the maturity of some of the aforementioned existing tactics. It would also benefit from an automated generation of equivalence proofs, such as \PumpkinPi{}~\cite{DBLP:conf/pldi/RingerPYLG21}.


\begin{credits}
  \ackname{
    The authors would like to thank András Kovács, Kenji Maillard, Enrico Tassi, Quentin Vermande, Théo Winterhalter, and anonymous reviewers, whose comments greatly helped us improve this article.
    }
\end{credits}

\bibliographystyle{splncs04}
\bibliography{main}

\end{document}